\documentclass[aps,pra,superscriptaddress,onecolumn,twoside,amsmath,amssymb,notitlepage,preprintnumbers]{revtex4-1}

\usepackage{graphicx}% Include figure files
\usepackage{dcolumn}% Align table columns on decimal point
\usepackage{bm}% bold math
\usepackage{hyperref}
\usepackage{subfigure}
\hypersetup{
colorlinks=true,
linkcolor=blue,
filecolor=blue,
citecolor=blue,  
urlcolor=black,
}
%\usepackage{hyperref}% add hypertext capabilities
%\usepackage[mathlines]{lineno}% Enable numbering of text and display math
%\linenumbers\relax % Commence numbering lines

%\usepackage[showframe,%Uncomment any one of the following lines to test 
%%scale=0.7, marginratio={1:1, 2:3}, ignoreall,% default settings
%%text={7in,10in},centering,
%%margin=1.5in,
%%total={6.5in,8.75in}, top=1.2in, left=0.9in, includefoot,
%%height=10in,a5paper,hmargin={3cm,0.8in},
%]{geometry}

\usepackage{comment}

\usepackage{amsthm}
\theoremstyle{plain}
\newtheorem{thm}{Theorem}
\newtheorem{lem}[thm]{Lemma}
\newtheorem{prop}[thm]{Proposition}

\newtheorem{defn}[thm]{Definition}

\newtheorem{exmp}[thm]{Example}

\newtheorem*{rem}{Remark}

\newcommand{\ket}[1]{|#1\rangle}
\newcommand{\bra}[1]{\langle #1|}

\newcommand{\ketbra}[2]{|#1\rangle\!\langle #2|}
\newcommand{\proj}[1]{|#1\rangle\!\langle #1|}

\newcommand{\id}{\operatorname{id}}
\newcommand{\1}{\openone}
\newcommand{\cN}{\mathcal{N}}
\newcommand{\cL}{\mathcal{L}}
\newcommand{\cD}{\mathcal{D}}
\newcommand{\cE}{\mathcal{E}}
\newcommand{\cM}{\mathcal{M}}

\newcommand{\cU}{\mathcal{U}}
\newcommand{\cV}{\mathcal{V}}
\newcommand{\cF}{\mathcal{F}}
\newcommand{\cS}{\mathcal{S}}
\newcommand{\CC}{\mathbb{C}}
\newcommand{\RR}{\mathbb{R}}

\newcommand{\ox}{\otimes}

\newcommand{\tr}{\operatorname{Tr}}

\begin{document}

%\preprint{MIT-CTP/5047}

\title{Resource theories of quantum channels \protect\\ and the universal role of resource erasure}

\author{Zi-Wen Liu}
\email{zwliu@mit.edu, zliu1@perimeterinstitute.ca}
\affiliation{Perimeter Institute for Theoretical Physics, Waterloo, Ontario  N2L 2Y5, Canada}
\affiliation{Center for Theoretical Physics, Massachusetts Institute of Technology, Cambridge, MA 02139, USA}
\affiliation{Research Laboratory of Electronics, Massachusetts Institute of Technology, Cambridge, MA 02139, USA}
\affiliation{Department of Physics, Massachusetts Institute of Technology, Cambridge, MA 02139, USA}

\author{Andreas Winter}
\email{andreas.winter@uab.cat}
\affiliation{ICREA---Instituci\'o Catalana de Recerca i Estudis Avan\c{c}ats, Pg.~Lluis Companys, 23, ES-08001 Barcelona, Spain} 
\affiliation{F\'{\i}sica Te\`{o}rica: Informaci\'{o} i Fen\`{o}mens Qu\`{a}ntics, 
Departament de F\'{\i}sica, Universitat Aut\`{o}noma de Barcelona, ES-08193 Bellaterra (Barcelona), Spain}

\begin{abstract}
We initiate the systematic study of resource theories of quantum channels,
i.e.~of the dynamics that quantum systems undergo by completely positive maps,
\emph{in abstracto}: Resources are in principle all maps from one
quantum system to another, but some maps are deemed \emph{free}.
The free maps are supposed to satisfy certain axioms, among them 
closure under tensor products, under composition and freeness of the 
identity map (the latter two say that the free maps form a monoid). 
The free maps act on the resources simply by tensor product and composition.
This generalizes the much-studied resource theories of quantum states,
and abolishes the distinction between resources (states) and the free
maps, which act on the former, leaving only maps, divided into resource-full 
and resource-free ones.

We discuss the axiomatic framework of quantifying channel resources, and show 
two general methods of constructing resource monotones of channels. 
Furthermore, we show that under mild regularity
conditions, each resource theory of quantum channels has a distinguished
monotone, the \emph{robustness} (and its smoothed version), generalizing
the analogous concept in resource theories of states.
We give an operational interpretation of the log-robustness as
the amount of heat dissipation (randomness) required for resource erasure by
random reversible free maps, valid in broad classes of resource
theories of quantum channels. Technically, this is based on an
abstract version of the recent \emph{convex-split lemma}, extended
from its original domain of quantum states to ordered vector
spaces with sufficiently well-behaved base norms (which includes the
case of quantum channels with diamond norm or variants thereof). 
Finally, we remark on several key issues concerning the asymptotic theory.\footnote{This work originated from a visit of ZWL to Barcelona in 2017. Preliminary results of it have been reported at various venues, including the ``Hong Kong-Shenzhen Workshop on Quantum Information Science'' in Shenzhen (China), May 2018, the ``Beyond IID in Information Theory'' workshop in Cambridge (UK), July 2018, and the ''Mathematics of Quantum Information'' conference in Siegen (Germany), March 2019.}
\end{abstract}

\date{8 April 2019}

\maketitle

\section{Resource theories: from states to maps}
\label{sec:intro}
The paradigm of resource theories has been applied successfully to
capture the essence of what is valuable, and how to measure its value,
in scenarios where certain objects and transformations are considered
relatively ``easy'' compared to others. In resource theories this
is idealized by considering some objects and transformations \emph{free},
meaning they may be invoked unlimitedly in any situation. Specifically,
resource theories of quantum states, i.e.~where the objects are states
of quantum systems undergoing (free) quantum channels, have proved to
be extremely successful in characterizing various quantum and other 
features of quantum states, such as 
entanglement \cite{virmani,PhysRevA.57.1619,entrev}, 
coherence \cite{PhysRevLett.113.140401,PhysRevLett.116.120404,RevModPhys.89.041003},
thermal non-equilibrium \cite{PhysRevLett.111.250404,nanothermo},
asymmetry (related to conserved quantities by Noether's theorem) 
\cite{PhysRevA.80.012307,noether}, magic states \cite{VMGE:magic,howard_2017}, etc. 
There are also several general theories that unify the common 
features of resource theories of states \cite{HorodeckiOppenheim:resource-theories,bg,rdm,PhysRevA.95.062314,regula,AnshuHsiehJain:erasure}.

It has been realized that the structure of resource theory can be formalized 
in much more abstract ways, and thus applied to more general settings. 
For instance, Refs.~\cite{COECKE201659,fritz_2017} establish algebraic frameworks
for resource theories, where the resources can essentially be any mathematical
object that has a combinable structure. 
In another direction, del Rio \emph{et al.} \cite{2015arXiv151108818D,delRio:currencies,2018arXiv180604937S} 
have attempted to formalize resource theories of knowledge, where rather than combining
systems, a top-down approach captures only subsystems of a global entity.

Here, rather than going all the way to these abstract structures,
we will explore a more modest, but for quantum mechanics highly
important and distinguished, extension, namely from quantum states to quantum channels
as the objects of the resource theory. 
%These may either be free or resourceful, and whose resource we would like to assess. 
There are a number of strong motivations for doing so, including but not restricted to the following few.  
On the one hand, quantum channels or processes can represent dynamical resources which, 
as opposed to static state resources, play natural roles in many physical scenarios. 
For example, certain quantum channels can be used to efficiently transmit quantum 
information, and certain thermodynamic processes can be implemented to do work.
Therefore, the resource theory approach for quantum channels is of great practical interest. 
On the other hand, due to the more complicated mathematical structure of quantum channels, 
the associated resource theory framework can be highly nontrivial and interesting from a 
mathematical point of view.  In particular, as will be discussed in more depth later, 
several key aspects of the resource theory approach such as resource composition and 
transformation become more subtle than the traditional state theory.
Furthermore, the application of resource theory approach to quantum channels augments 
and advances the study of this core area of quantum information, and of course 
the understanding of this extended resource theory scheme could greatly benefit 
from the profound literature of e.g.~quantum channel coding and capacities. 
Our programme is not entirely new, in fact it has been done, at least in part, 
for certain concrete quantum resource theories:
for instance for a good part of quantum Shannon theory \cite{DHW:Shannon-resource,qrst},
bipartite entanglement \cite{bip}, 
athermality \cite{thermal_c,FaistRenner:thermo-cost,FaistBertaTomamichel:thermo-cost},
and recently for the resource theories of coherence \cite{coh_c,coh_mio} and magic \cite{WangWildeSu:channel_magic}.
These prior works will serve us, among others, as instructive
concrete examples to look at, where the general structures may
seem abstract and uninformative.  
We would also like to mention that an ongoing work by Gour also contributes to this programme, 
and several recent developments on the entropies and relative entropies of 
channels \cite{GourWilde,Gour:superchannel,Yuan:channel-hypothesis} 
are relevant.

In this work, we will present a general framework of resource theories of
quantum channels, characterized by a set of free operations 
that have to satisfy certain axioms (Section~\ref{sec:free}). 
These are used to transform the resourceful channels, essentially
by composition and tensor product with free ones, following the structures of cptp quantum circuits  \cite{Aharonov:cptpcircuit} and quantum combs \cite{comb,Pavia-combs} (Section~\ref{sec:transformations}).
Transformations can be considered either exact, or probabilistic,
or approximate (the latter by default with respect to the diamond norm, 
the natural statistical distance on quantum channels, or some relative
of it, which appear naturally when considering multiple resources).
Unlike quantum states, where composition of systems and states is
captured perfectly by the tensor product, this topic is more
subtle for quantum channel, as we will discuss in Section~\ref{sec:composition}.
This is an aspect which has not attained the full attention it
deserves, and we hope to clarify it here in the general setting.

We will then show two important ways of constructing monotones
for these resource theories (Section~\ref{sec:measures}). We
also introduce a very important distinguished measure, the 
\emph{resource robustness} of a channel, which then we show to
have an operational interpretation as the heat dissipation 
required in the erasure of resourceness (Section~\ref{sec:erasure}). 
The erasure protocol is based on a generalized version 
of the recent so-called ``convex-split lemma'' for quantum states,
which we prove in the abstract setting of ordered vector
spaces equipped with a sufficiently well-behaved norm
(Appendix~\ref{app:convex-split}).

\medskip
Before we get started on the actual formalism of our general form
of resource theories of quantum channels, we comment briefly on the
setting. We will have to talk about different quantum systems, denoted
$A$, $B$, etc, each coming with its own complex Hilbert space, which we
conveniently denote by the same letter, not to overload notation.
We stress, however, that reference to ``a quantum system $A$'' is
much more than the Hilbert space; it comprises all physical features 
relevant to describe the system, be that a Hamiltonian and other
distinguished observables, a semigroup, a computational basis, a
tensor product structure (so as to have a meaningful entanglement
structure), etc. Not every quantum system can be treated within every
resource theory, it is the resource theory that has to identify ``its''
systems, and for each pair of systems $A$ and $B$, a subset
of ``free'' quantum channels (completely positive and trace preserving
maps, cptp). Formally, the mathematical structure naturally adapted to
describe this is a category, the objects being a certain class
of quantum systems, and the morphisms the cptp maps between them;
the resource theory of channels we are aiming at could be described
as a sub-category, having the same objects, but for each pair
of quantum systems a subset of ``free'' channels. We will not 
enforce this language, not to encumber or presentation with 
additional formal baggage. Instead, we will silently assume that
every system we speak about has sense in the resource theory at hand.
What we will however require is that there is a natural tensor product for
the admissible quantum systems, which in particular is represented
by the Hilbert space tensor product for the underlying state spaces,
and which is associative and commutative (up to natural isomorphism,
in the parlance of category theory).

Prominent examples of such theories include the following, to which we
will return at suitable points of our development:
\begin{itemize}
\item Quantum Shannon theory: resources are channels from Alice to Bob, 
      while all local channels are free \cite{DHW:Shannon-resource}.
\item Bipartite unitaries as resources, with LOCC as free operations \cite{bip}.
\item Thermodynamics of systems with non-interacting Hamiltonians: Gibbs-state-preserving maps
      are free. Work cost and work capacity of states and channels are of particular interest.
\item Coherence theory (with any of several classes of free operations, IO, MIO, DIO, etc).
      Again, of special interest are coherence generating capacity and coherence 
      cost \cite{coh_c,coh_mio}.
\item Entanglement-assisted communication: This is a resource theory where every state is free, yet
      it is nontrivial. The setting is of bipartite systems (Alice and Bob) with free entanglement and
      free local operations, and resources are, among others, channels from Alice to Bob; more
      generally, bidirectional channels between them. A variant has all no-signalling 
      channels free.
\end{itemize}

\section{Free resources}
\label{sec:free}
The free resources need to satisfy a set of axioms in order for the 
resource theory to exhibit a well-behaved structure.  
The conditions introduced in the following underlie a reasonable framework 
for a resource theory of quantum channels, and allow us to prove certain results.  
We shall first write down each mathematical statement, and then explain what
it means physically or operationally. 

Let $\mathfrak{F}$ be the class of free channels, which in a certain
precise sense defines the entire resource theory, as we shall see.
What we mean by this is that for each pair $A$ and $B$ of quantum systems,
for which we denote their Hilbert spaces by $A$ and $B$, respectively,
as well, we have a set $\mathfrak{F}(A\longrightarrow B)$
of channels, i.e.~cptp maps, so that 
$\mathfrak{F}(A\longrightarrow B) \subset \text{CPTP}(A\longrightarrow B)$.
We start with three necessary conditions; they should always hold, and 
will be repeatedly used in proofs.

\begin{enumerate}
  \item $\mathfrak{F}$ is closed under composition and tensor product. 
  \smallskip\\ \emph{Composition and tensor product represent the two possible ways to 
               combine two channels: acting them sequentially or in parallel. 
               The combination of two free channels should remain free.}
  \item All sets $\mathfrak{F}(A\longrightarrow B)$ are topologically closed, 
        i.e.~closed under limits.
  \smallskip\\ \emph{If a convergent sequence of channel is free, this means that the
               limiting channel can be approximated to arbitrary precision,
               and hence we would like to consider it free as well.
               Note that this requires a topology, which in the case of finite
               dimensional $A$ and $B$ is unique, but in general we shall
               silently assume the diamond norm topology.}
  \item For all $A$, $\mathfrak{F}(A\longrightarrow A)$ contains the identity
        channel $\id_A$.
  \smallskip\\ \emph{Doing nothing is always free. Together with item 1 this also
               guarantees that tensoring the identity is a free extension.}
\end{enumerate}

Following Gour, because of properties 1 and 3 above, we call free operations 
that satisfy the above axioms ``completely free''.

If one of the two Hilbert spaces $A$ or $B$ is trivial, i.e.~one-dimensional,
$\text{CPTP}(A\longrightarrow B)$ is special. On the one hand,
$\text{CPTP}(A\longrightarrow \CC)$ consists only of the trace $\tr_A$,
which maps every state on $A$ to $1$. On the other hand,
$\text{CPTP}(\CC\longrightarrow B)$ can be identified with the set
of states on $B$, denoted $\cS(B)$, because $\CC$ has only one state,
$1$, and each cptp map is entirely characterized by its image on that
point.

The following conditions do not seem necessary for all reasonable theories, 
but they are often desirable and can play key roles in some arguments. 
We list them as optional conditions:

\begin{enumerate}
  \setcounter{enumi}{3}
  \item For every system $A$, the partial trace over $A$ is free, 
        i.e.~$\tr_A \in\mathfrak{F}(A\longrightarrow\CC)$.
        \smallskip\\ 
        \emph{Ignoring subsystems is allowed.}
  \item Every system $B$ has some free states,
        i.e.~$\mathfrak{F}(\CC\longrightarrow B) \neq \emptyset$ is non-empty.
        \smallskip\\ 
        \emph{This is in some sense dual to the previous condition, and
              allows us to create a system in a free state.}
  \item All sets $\mathfrak{F}(A\longrightarrow B)$ of free channels are convex.
        \smallskip\\ 
        \emph{Probabilistic mixing does not create resource channels from free ones.}
  \item In systems composed of identical parts, the permutations are
        free operations, i.e.~for $A^n=A^{\otimes n}$, the cptp maps
        $\cU_\pi = U_\pi\cdot U_\pi^\dagger \in \mathfrak{F}(A^n\longrightarrow A^n)$,
        where $U_\pi$ is the unitary acting on the $n$ systems by
        permutation $\pi\in S_n$.
        \smallskip\\ 
        \emph{This is true in many cases, since permutations are nothing but
              a relabelling of the $n$ copies of $A$, which is considered
              purely conventional.}
\end{enumerate}

\medskip
Where does the set of free channels come from? In the first instance, 
as we have mentioned before, it may be the very physical expression of
what the resource theory is about, for instance local operations and
classical communication (LOCC) in the case of entanglement; or local
encodings and decodings in the resource theory formulation of Shannon
theory; or thermal operations in the resource theory of quantum
thermodynamics; or incoherent operations in the resource theory of
coherence. It can be checked that all of the named examples satisfy
properties 1 through 3, and usually all or most of 4 through 7.

However, in theories of state resources, where the free states tend to
play a special role, one can define a maximal set of free operations,
which consists of all operations that do not create resource from free states, 
i.e.~that map free states to free states, since allowing any other operation 
would trivialize the theory. 
For very broad classes of resource theories of states, it leads to a key 
feature of resource interconversions, namely \emph{asymptotic reversibility}:
The rate of interconverting asymptotically many copies of any state into 
any other state, is governed by the (regularized) relative entropy of 
resourceness \cite{bg}. It should be noted however that this maximal
set of operations tend to violate property 1 above. That is, while 
these sets or certainly closed under composition (concatenation)
of free maps, in many concrete examples, the tensor product of
free maps is no longer free; this happens for instance in the case
of entanglement, where the free states are the separable states and
the mentioned maximal set is the set of all cptp maps taking separable
to separable states \cite{BrandaoPlenio:ent-reversible}.
Consider on the other hand the simple example of the resource theory of coherence: 
It doesn't matter what initial set of incoherent operations one considers
for free, they all have in common the free incoherent states $\mathcal{I}$
\cite{PhysRevLett.113.140401}; this then defines the maximal incoherent operations (MIO)
\cite{2006quant.ph.12146A,RevModPhys.89.041003} as the largest set of
operations mapping incoherent to incoherent states. In this case we are 
lucky, and the theory satisfies all the axioms 1 through 7 above.
In cases like this it is interesting to ask whether the resulting
resource theory of channels is asymptotically reversible or not.
Before we can face this question, we have to consider how the free
channels act on the other channels as resource transformations.

\section{Resource transformations: the channel simulation framework}
\label{sec:transformations}
The fundamental question in a resource theory, and in fact the reason
for its existence, is which resource can be transformed to another one
by free operations. In the case of resource theories of quantum states
this is straightforward: the free operations act directly on the states.
Here, since the resources are also channels, it is both simpler and
more subtle: First, it is simpler because we can compose the given 
resource with a free channel, either as pre- or post-processing.
To attach ancillary systems we declare that adding a free transformation
to a given resource is always possible for free, so that in particular
the tensor product with a free resource is an allowed transformation.
Secondly, it is more subtle than in the case of states, because attaching ancillary 
systems cannot be treated as a cptp map itself. Conversely, removing a system
is not trivial, either, because of the asymmetry between input and
output systems of a channel: clearly an output system can be 
composed with the trace, the unique cptp map with range the complex 
numbers $\CC$, but to remove an input system, a state is required,
which is not unique.

However, before we build up the framework of resource transformation from
the bottom, it will be good to take a more abstract perspective, namely
that of ``supermaps'' \cite{Pavia-supermap}, or rather ``superchannels'', as we will call 
them following \cite{ChitambarGour:QRT,Gour:channel_resource}. These are
defined as linear transformations on the channels, themselves embedded in
the linear space of Hermitian-preserving maps, 
which have to preserve complete positivity and trace-preservation,
not only acting themselves but also tensored with the identity:
I.e., a superchannel taking maps $A\longrightarrow B$ to
maps $A'\longrightarrow B'$ should satisfy
\[
   \Theta\ox\id_R\bigl(\text{CPTP}(AR\longrightarrow BR)\bigr) \subset \text{CPTP}(A'R\longrightarrow B'R).
\]
In \cite{Pavia-supermap} it is shown that these requirements are equivalent to 
the following structure of $\Theta$ (as illustrated by Fig.~\ref{fig:sim}):
\begin{equation}
  \label{eq:structure-comb}
  \Theta(\cN) = \cD \circ (\cN\otimes\id_C) \circ \cE,
\end{equation}
with a suitable auxiliary system $C$ and ``encoding'' and
``decoding'' cptp maps $\cE:A' \longrightarrow A\otimes C$
and $\cD:B\otimes C \longrightarrow B'$, respectively.   
%\textcolor{red}{...Put a nice figure...}
\begin{figure}[htbp]
    \centering
    \includegraphics[scale=0.45]{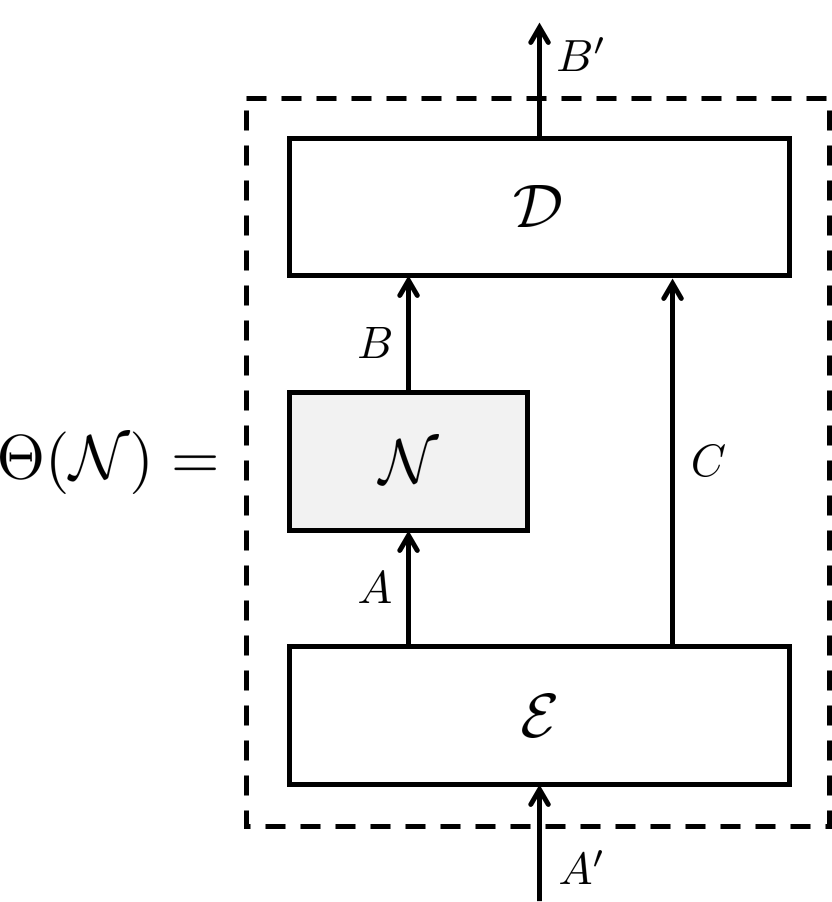}
    \caption{General scheme of channel simulation.}
    \label{fig:sim}
\end{figure}

What can we learn from this picture about resource theories of channels? 
On the one hand, $\Theta$ seems to be what we are looking for, a 
transformation of channels. For it to be admissible in the usual
framework of resource theories, it should map free channels to free channels;
in fact it should do so in a complete way, i.e.~even when tensored
with the identity on an auxiliary system $R$:
\begin{equation}\begin{split}
  \label{eq:freeness-preserving}
  \Theta\bigl(\mathfrak{F}(A\longrightarrow B)\bigr) &\subset \mathfrak{F}(A'\longrightarrow B'), \\
  \Theta\ox\id_R\bigl(\mathfrak{F}(AR\longrightarrow BR)\bigr) &\subset \mathfrak{F}(A'R\longrightarrow B'R).
\end{split}\end{equation}
Note that Gour's work \cite{Gour:superchannel} presents necessary and sufficient conditions for 
such channel conversion for a set of initial and final channels by means of a single superchannels, 
expressed in terms of the extended conditional min-entropy for quantum channels.
This is clearly a minimal requirement, as allowing any other superchannel would lead to the 
ability of creating resourceful channels from free ones, but it is less clear
whether we should allow all such superchannels. 
In particular, it would be important that the transformation takes the form of cptp quantum 
circuits \cite{Aharonov:cptpcircuit} or ``free'' quantum combs \cite{comb,Pavia-combs}.
Looking at the form of $\Theta$ in Eq.~(\ref{eq:structure-comb}), while this decomposition
is not unique at all, it is clear that $\Theta$ preserves the free channels
if $\cE$ and $\cD$ are both free, i.e.~$\cE\in\mathfrak{F}(A'\longrightarrow A\otimes C)$
and $\cD\in\mathfrak{F}(B\otimes C\longrightarrow B')$.
Here we will take the view that an admissible resource transformation
should have just such a decomposition. While this is mathematically
more complex than the freeness-perservation expressed in 
Eq.~(\ref{eq:freeness-preserving}), since to check eligibility we
have to search over all decompositions according to Eq.~(\ref{eq:structure-comb}),
we hold that it is conceptually much clearer, since now the resource
theory rises entirely out of $\mathfrak{F}$, and especially all
resource transformations are built up by constructing quantum
circuits from free channels.

\begin{defn}
\label{def:F-simulation}
We say that a channel $\cN:A\longrightarrow B$ can simulate a channel 
$\cN':A'\longrightarrow B'$ using the free resources $\mathfrak{F}$
(or equivalently that $\cN'$ can be implemented using $\cN$), if
there exists a system $C$, and $\cE\in\mathfrak{F}(A'\longrightarrow A\otimes C)$
and $\cE\in\mathfrak{F}(B\otimes C\longrightarrow B')$,
such that $\cN' = \cD \circ (\cN\otimes\id_C) \circ \cE$;
we shall use the notation $\cN \stackrel{\mathfrak{F}}{\longrightarrow} \cN'$.

If there exists a sequence of triples $(C,\cE,\cD)$
such that the channels $\cD \circ (\cN\otimes\id_C) \circ \cE$ converge
to $\cN'$, we speak of an asymptotic simulation or implementation,
using however the same notation as before unless it could result in
a confusion.

If $\frac12 \|\cD \circ (\cN\otimes\id_C) \circ \cE -\cN' \|_\diamond \leq \epsilon$,
we speak of an $\epsilon$-simulation or -implementation, and denote it 
$\cN \stackrel{\mathfrak{F}}{\longrightarrow} \; \stackrel{\epsilon}{\approx}\!\cN'$.
\end{defn}

This defines the free transformations from channels to channels;
by construction, they fulfill Eq.~(\ref{eq:freeness-preserving}).
However, in general there is no reason why conversely a superchannel $\Theta$
satisfying the latter condition has to have an implementation in terms
of Definition \ref{def:F-simulation}.

In any case, these transformations define a partial order on channels, 
modulo equivalence. For each specific theory, the question arises
how to characterize this equivalence relation and the partial order.

\begin{exmp}
Consider the case of a resource theory with free channels
\[
  \mathfrak{F} := \{ \text{cptp maps preserving the maximally mixed state} \}.
\]
This is a subset of the resource theory of thermodynamics, to be precise
of the Gibbs-preserving maps, when the Hamiltonian is trivial.
For $A=B$, $\mathfrak{F}(A\longrightarrow A)$ consists precisely
of the unital cptp maps. For these, the transformability of 
states is well-known to be characterized precisely by the 
majorization relation \cite{marshall1974}: For $\rho,\sigma \in \cS(A)$
there exists a free channel $\cF$ such that $\cF(\rho)=\sigma$ iff $\rho \prec \sigma$.
What is the generalization of this criterion to channels
$\cM,\cN\in \text{CPTP}(A\longrightarrow A)$?
\end{exmp}

\begin{comment}
\begin{exmp}
Consider the resource theory of coherence where the free channels, not sure what we can say yet... Unitary $U, V$.  
for all $i\in\{1,...,n\}$, $\{|U_{i1}|^2,|U_{i2}|^2,...,|U_{in}|^2\} \succ \{|V_{i1}|^2,|V_{i2}|^2,...,|V_{in}|^2\}$  (column majorization).  The other way: consider incoherent unitary $W = \sum_j e^{i\theta_{j}^\mu}\ket{\pi_\mu(i)}\bra{i}$, $\pi$ permutation.  Then $\{|U_{i1}|^2,|U_{i2}|^2,...,|U_{in}|^2\} \succ\prec \{|V_{i1}|^2,|V_{i2}|^2,...,|V_{in}|^2\}$...   ? 
\end{exmp}

\textcolor{red}{How should we mention Gilad's recent paper (containing results on Gibbs-preserving superchannel) and the new general result he mentioned in the email?
''So far, I found necessary and sufficient conditions for exact channel simulation that can be expressed in terms of the extended conditional min-entropy.
More precisely, I found a family of infinitely many resource monotones (for channels) that provide necessary and sufficient conditions for converting one channel to another by free superchannels...'' }
\end{comment}

There is a simple case where the possibility of simulation could be directly related to the matrix representations of the unitary channels:
\begin{exmp}
Consider the resource theory of coherence, with free channels $\mathfrak{F} := \text{IO or SIO}$, and unitary channels $U, V$ acting on $k$-dimensional Hilbert space. 
A necessary condition for the simulation $U \stackrel{\mathfrak{F}}{\longrightarrow} V$ is given by majorization between the squared column vectors of the matrix representation of $U,V$ in the incoherent basis:
%Let $U = \{u_{ij}\}$ and $V = \{v_{ij}\}$, $i,j = 1,...,k$ be the matrix representation of $U,V$ in the incoherent basis, respectively. 
$\{|u_{i1}|^2,|u_{i2}|^2,...,|u_{ik}|^2\} \succ \{|v_{i1}|^2,|v_{i2}|^2,...,|v_{ik}|^2\}$ for all $i = 1,...,k$, where $u_{ij}, v_{ij}$ are matrix elements of $U,V$. 
This simply follows from that we must have $\Delta\bigl(U(\ketbra{j}{j})\bigr)\succ\Delta\bigl(V(\ketbra{j}{j})\bigr)$ for all incoherent basis states $\ket{j}$ \cite{PhysRevLett.116.120404}.
%The other way: consider incoherent unitary $W = \sum_j e^{i\theta_{j}^\mu}\ket{\pi_\mu(i)}\bra{i}$, $\pi$ permutation.  Then $\{|U_{i1}|^2,|U_{i2}|^2,...,|U_{in}|^2\} \succ\prec \{|V_{i1}|^2,|V_{i2}|^2,...,|V_{in}|^2\}$...   ? 
\end{exmp}

%More precisely, I found a family of infinitely many resource monotones (for channels) that provide necessary and sufficient conditions for converting one channel to another by free superchannels.

\section{Composition of multiple resources}
\label{sec:composition}
The issue of having access to several resources at the same time,
in the case of the resources being states, is satisfactorily
captured by the same tensor product that governs the composition
of systems: That is, to have access to states $\rho_i \in \cS(A_i)$ for
$i=1,\ldots,n$, can be equivalently described by having the
state $\sigma = \rho_1\ox\rho_2\ox\cdots\ox\rho_n \in \cS(A_1\ox\cdots\ox A_n)$.
The reason is that any free operation $\cF_i \in \mathfrak{F}(A_i\longrightarrow B_i)$
that could be used to transform $\rho_i$, is still available as
$\cF_i \ox \id_{A_{\neq i}} 
 \in \mathfrak{F}\left(A_{[n]} \longrightarrow B_i\ox A_{\neq i}\right)$,
where $A_{[n]} = \bigotimes_{j=1}^n A_j$ 
and $A_{\neq i} = \bigotimes_{j\in[n]\setminus i} A_j$, and so is
any tensor product, but of course potentially much more. Because of this,
a collection of states $\rho_i$ may without any loss of generality
be identified with the single tensor product state $\sigma$.

For channels, on the other hand, having access to quantum channels 
$\cN_1:A_1\longrightarrow B_1$ and $\cN_2:A_2\longrightarrow B_2$
as resources, could naturally mean that one can build $\cM = \cN_1\ox\cN_2$, but 
this does not capture the essence of having access to the two channels
separately, as in general $\cM$ cannot be used, in conjunction with
free resources, to build $\cN_2\circ\cF\circ\cN_1$, for a free
channel $\cF\in\mathfrak{F}(B_1\longrightarrow A_2)$, or similarly
$\cN_1\circ\cF'\circ\cN_2$, for a free channel $\cF'\in\mathfrak{F}(B_2\longrightarrow A_1)$,
or more complicated circuits built from free channels and taking $\cN_1$
and $\cN_2$ as black boxes in certain places. Conversely, neither
of these combinations is in general able to simulate the tensor product $\cM$,
nor each one the other. %\textcolor{red}{Need a figure showing these circuits...}
Note, however, that the tensor product at least mathematically
represents the pair $(\cN_1,\cN_2)$.
Indeed, the most general channel that can be created using $\cN_1$ and $\cN_2$
can be written as $\cF_2\circ(\cN_2\otimes\id)\circ\cF_1\circ(\cN_1\otimes\id)\circ\cF_0$,
or $\cF_2\circ(\cN_1\otimes\id)\circ\cF_1\circ(\cN_2\otimes\id)\circ\cF_0$.

\begin{figure}[htbp]
    \centering
    \includegraphics[scale=0.45]{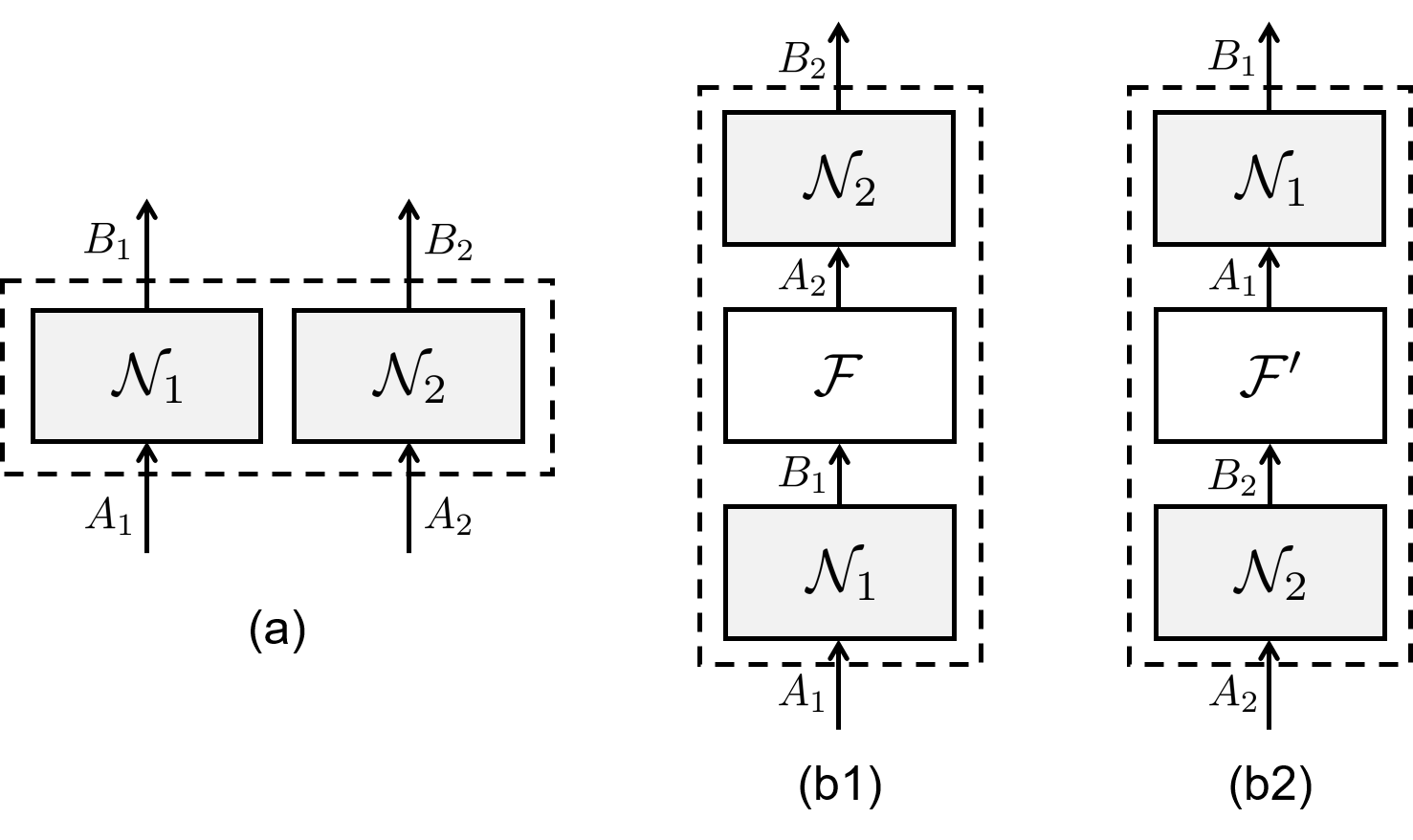}
    \caption{Different basic schemes of composing resource channels $\mathcal{N}_1,\mathcal{N}_2$. 
             (a) Parallel composition (tensor product): $\mathcal{N}_1\otimes\mathcal{N}_2$; 
             (b) Sequential composition, allowing intermediate free processings: 
             (b1)  $\cN_2\circ\cF\circ\cN_1$; 
             (b2) $\cN_1\circ\cF\circ\cN_2$. %\textcolor{red}{(c)....}
             In general, neither one can simulate another.}
    \label{fig:composition}
\end{figure}

That all this is not an esoteric theoretical issue, is demonstrated 
by the resource theories of entanglement \cite{bip},
quantum Shannon theory \cite{DHW:Shannon-resource}, quantum
thermodynamics \cite{thermal_c,FaistBertaTomamichel:thermo-cost},
and coherence theory \cite{coh_c,coh_mio}: in all these cases, channels
are used iteratively to distill, for instance, a unit resource
(entanglement, work, cosbits, etc). To be precise, while the most
general allowed protocol to generate bipartite entanglement from a given
bipartite unitary resource $U$ in \cite{bip} alternates one application
of the unitary with a general LOCC, a total of $n$ times, the
rate-optimal procedure partitions $n = n_1 + n_2 + \ldots + n_r$
into a fixed (or equivalently: arbitrarily slowly growing) number
of blocks, and in round $i\in[r]=\{1,\ldots,r\}$ acts with $U^{\ox n_i}$
on the state created in round $i-1$, and follows it with an LOCC
operation. While the blocks are not all equal in size, the rate
$R_i = \frac{n_i}{n}$ is a non-zero number in the asymptotics $n\rightarrow\infty$.
This pattern is found again in \cite{thermal_c}, where work is
extracted from a general channel $\cN$ by interleaving $\cN^{\ox n_i}$
with thermal operations; likewise in \cite{coh_c,coh_mio} in the case of
coherence generation. In the resource theory of quantum Shannon theory \cite{DHW:Shannon-resource}, 
this issue was somewhat avoided, by not allowing bipartite channels, but it could
not be kept out completely, since it showed up in the primitive of
``recycling'' resources. Thus, for asymptotic resources, 
$n \gg 1$ copies of $\cN$, it seems we may almost identify the
resource with the single channel $\cN^{\ox n}$, as long as we
restrict to protocol where the ``adaptiveness'', i.e.~the depth
$r$ of rounds in which we use the channels sequentially rather
than parallel, can be considered a constant (or arbitrarily slowly growing).

The latter is clearly not an option if we are in the regime of
small $n$ or when we are dealing with a set of completely general resource channels, 
either in the regime of small $n$ or asymptotically.
So, we have to consider from first
principles what it might mean to have access to channels
$\cN_1,\ldots,\cN_n$. To sharpen the concepts, it might help to 
think of the resource being provided by a ``server'', and a 
``client'' wishes to use it. The client may dispose already of some other
quantum circuitry, and wishes to call upon the server to provide 
$\cN_i$ at any given moment, so as to use it in any place and in
any order in their circuit. By restricting the kind of context
in which the client may use the individual resources, we can tune the
character of the joint object $\cN_1\ox\cdots\ox\cN_n$. We shall
distinguish three levels: the plain tensor product
$\cN_1\ox\cdots\ox\cN_n \in \text{CPTP}(A_{[n]}\longrightarrow B_{[n]})$,
which requires an input to all $n$ systems $A_i$ before being invoked;
the \emph{channel sequence} $(\cN_1,\ldots,\cN_n)$, which can be used
in the prescribed order; and the \emph{channel collection}
$\{\cN_i\}$, which may be used in any order. While the first
is naturally an element of the cptp maps from $A_{[n]}$ to $B_{[n]}$,
the second is naturally contained in the set of memory channels, which we denote
$\text{CPTP}_{m}(A_{\vec{[n]}}\longrightarrow B_{\vec{[n]}})$,
and the third is most naturally considered an element of the set
of no-signalling channels, denoted
$\text{CPTP}_{ns}(A_{{[n]}}\longrightarrow B_{{[n]}})$.

Each of the three classes comes with its own natural distinguishability
norm, which is defined via the largest class of contexts in which to
build a hypothesis test between the two objects. In the case of plain
cptp maps, $\text{CPTP}(A_{[n]}\longrightarrow B_{[n]})$, this is just 
the well-known diamond norm $\|\cdot\|_\diamond$.
In the case of memory channels, $\text{CPTP}_{m}(A_{\vec{[n]}}\longrightarrow B_{\vec{[n]}})$,
this is the largest trace norm difference between the outputs of
so-called quantum combs \cite{Pavia-combs,comb}, a memory channel itself, presented
as a circuit with $n$ placeholders in which the channels $A_i\rightarrow B_i$
have to be used in the prescribed order. We denote this $\|\cdot\|_{\diamond\rightarrow}$,
and like the diamond norm \cite{Watrous:cb_sdp} it is an SDP, because the relevant quantum combs have
an SDP characterization.
In the case of no-signalling channels, $\text{CPTP}_{ns}(A_{{[n]}}\longrightarrow B_{{[n]}})$,
there are actually two options: the first is operational, defining 
$\|X\|_{\diamond\leftrightarrow}$ has the maximum of $\|X^\pi\|_{\diamond\rightarrow}$
over all permutations $\pi\in S_n$ of the order of the $n$ systems; this is still an SDP,
but the additional maximization makes it problematic for large $n$. The second 
defines $\|X\|_{\diamond ns}$ as the maximum trace norm of the output of any 
generalised superchannel that maps no-signalling channels to no-signalling channels;
the latter is a strictly larger class as the former, as shown in recent work
by Salek \emph{et al.} \cite{Salek:superposition}. In any case, also this norm is an SDP.
Hence, for an element of the linear span of $\text{CPTP}_{ns}(A_{{[n]}}\longrightarrow B_{{[n]}})$, 
we get the chain of inequalities
\[
  \|X\|_{\diamond} \leq \|X\|_{\diamond\rightarrow} \leq \|X\|_{\diamond\leftrightarrow} \leq \|X\|_{\diamond ns}.
\]

\begin{comment}
\begin{defn}
\label{def:levels}
Make this precise...
\end{defn}

\begin{defn}
\label{def:diamond-levels}
Three levels of diamond norm for the three levels of objects, defined
according to distinguishability...
\end{defn}
\end{comment}

In the following, we will only make use of the plain tensor product as a 
big quantum channel, and of the no-signalling set. The reason is, that
the ordered sequences, in the set of memory channels, do not compose 
in a commutative way. Indeed, for two cptp maps $\cN_1$ and $\cN_2$,
$(\cN_1,\cN_2)$ and $(\cN_2,\cN_1)$ are in general incomparable as channel
sequences. 

To define what it means that a channel collection simulates
(exactly or approximately) another channel collection, this is easy 
if we interpret the initial and final resources
$\cN_1\ox\cdots\ox\cN_n \in \text{CPTP}(A_{[n]}\longrightarrow B_{[n]})$
and $\cM_1\ox\cdots\ox\cM_m \in \text{CPTP}(A_{[m]}'\longrightarrow B_{[m]}')$,
respectively, as simple cptp maps, as we can apply the definition from the
previous section.
For the genuine collection case, where we interpret 
$\cN_1\ox\cdots\ox\cN_n \in \text{CPTP}_{ns}(A_{[n]}\longrightarrow B_{[n]})$
and $\cM_1\ox\cdots\ox\cM_m \in \text{CPTP}_{ns}(A_{[m]}'\longrightarrow B_{[m]}')$
as no-signalling channels, we have to be more careful. With the idea that
to ''have'' a collection of channels $(\cM_1,\ldots,\cM_m)$ means having
the ability to apply them in any context and any order, we say that
$\alpha=\cN_1\ox\cdots\ox\cN_n$ simulates $\beta=\cM_1\ox\cdots\ox\cM_m$ if for any
permutation $\pi\in S_m$, there is a permutation $\tau\in S_n$ and a quantum
comb $\mathcal{C}$ such that 
\[
  \beta^\pi := \cM_{\pi(1)}\ox\cdots\ox\cM_{\pi(m)} = \mathcal{C}(\alpha^\tau).
\]
In the case of approximate simulation, the relevant norm 
$\|X\|_{\diamond\leftrightarrow}$ or $\|X\|_{\diamond ns}$ is used.
Naturally, we are mostly interested in \emph{free} transformations $\mathfrak{F}$
in a given resource theory of channels; to capture that, we shall simply demand that
the combs $\mathcal{C}$ above can be built by quantum circuitry involving only
free channels. Note that by the structure theorem of \cite{Pavia-combs}, every
quantum comb has a circuit decomposition with feed-forward memory, so that all
we are asking in addition is that the circuit elements be free. This generalises
our previous definition of channel simulation in the case of a single cptp map.

\begin{comment}
  MIO simulation of a sequence of channels by a single state coherent resource,
  but channels are produced adaptively...
\end{comment}

\section{Resource measures and monotones}
\label{sec:measures}
Another theme of resource theories, somewhat dual to the transformation
question, is the quantification of resource, which is studied either axiomatically or operationally.  These two approaches aim at addressing the following basic questions respectively: What mathematical properties does a functional need to satisfy to qualify as a sensible resource measure, and how to construct such measures? 
How does the magnitude of a certain measure correspond to the value or performance of the resource in some operational task, i.e.~what is the operational interpretation of this measure?

In the well-established resource theories of states, these problems are quite 
thoroughly studied.  A basic property that a resource measure should exhibit is 
monotonicity, meaning that it should never increase under free operations. 
So resource measures are also commonly referred to as ``resource monotones''.  
For example, the minimum distance (given by some well-behaved metric) to the set 
of free states is an intuitive way to measure the degree of ``resourceness'' in general, 
and indeed typically a monotone. 
We also have considerable knowledge about the operational meanings of various resource measures, both in specific resource theories and in the general setting.  
A standard task is to identify a ``unit'' resource, or an otherwise desirable object, and study how efficiently a given resource can be converted into the unit, or vice versa how efficiently the unit resource can be used to create the given one, only by free operations.
In the one-shot settings, the resource value (determined by the optimal rates of accomplishing such conversion tasks) are typically given 
by some form of smooth min-/max-relative 
entropy of resource (distance to the set of free states) 
\cite{Datta:max,BrandaoDatta:one-shot,PhysRevLett.120.070403,2017arXiv171110512R,AnshuHsiehJain:erasure,1rt}; 
while in the asymptotic (many-copy, i.i.d.) settings, the (regularized) relative entropy of resource 
is found to play a distinguished and universal
role \cite{PhysRevLett.89.240403,HorodeckiOppenheim:resource-theories,bg}.

Here, we initiate similar studies for resource theories of quantum channels at a general level.
We first lay the faithfulness and monotonicity conditions that proper resource measures of channels should satisfy, and then propose two general types of resource monotones for quantum channels. In particular, 
we introduce the max-relative entropy between channels and the 
log-robustness of channels in a given resource theory (and their smoothed 
versions), which will find operational meanings in the next section.

\subsection{Conditions}
We now explicitly list the conditions for resource measures defined on quantum channels.
The main difference between channels and states is that, since now the objects 
are themselves channels, the measure is expected to be monotone under both 
free pre- and post-processing as well as tensoring with free channels, or even more generally under all free superchannels.
Let $\Omega$ be a functional mapping cptp maps to real numbers. To qualify as
a resource monotone, the following properties are deemed necessary in any case or sometimes desirable (labeled by `*'). 

\begin{enumerate}
    \item Normalization:  
         $\Omega(\cN) = 0$ 
          if $\cN\in\mathfrak{F}$, and $\Omega(\cN)\geq 0$ for all channels $\cN$.
          \smallskip\\
          \emph{A proper measure must vanish on free channels and always take non-negative values.}

    \item *Faithfulness:
          $\Omega(\cN) = 0$ if and only if $\cN\in\mathfrak{F}$, and $\Omega(\cN) > 0$ otherwise.
    
    \item Monotonicity under composition with any free $\cM\in\mathfrak{F}$:
          In detail, under
          \begin{enumerate}
         
           \item left composition:  $\Omega(\cM\circ \cN) \leq \Omega(\cN)$;
           \item right composition: $\Omega(\cN\circ \cM) \leq \Omega(\cN)$;
           \item tensoring: $\Omega(\cN\otimes\cM) \leq \Omega(\cN)$.
          \end{enumerate}
          \smallskip
          \emph{A proper measure should be monotone non-increasing under left and right 
          composition with free channels. as well as tensoring, respectively;
          this ensures that if $\cN'$ can be simulated by $\cN$, then 
          $\Omega(\cN') \leq \Omega(\cN)$.}
          %Note that, in theories where the allowed free superchannels are clear 
     
     \item *Convexity: 
           $\sum_i p_i \Omega(\cN_i) \geq \Omega\left(\sum_i p_i \cN_i\right)$, 
           where $\{p_i\}$ is a probability distribution over channels $\cN_i$.
           \smallskip\\ 
           \emph{One may want a good measure to be non-increasing under convex 
           combination/probabilistic mixing, since it is usually considered easy.}
\end{enumerate}

Some simple observations can be made straight away about the above conditions.  First, the nonnegativity property can follow from the nullity condition and monotonicity, and hence does not need to be independently enforced.  
Second, monotonicity under left-composition is an
equality if $\cM$ has a left inverse $\cM'\in\mathfrak{F}$, i.e.~$\cM'\circ\cM=\id$;
likewise, monotonicity under right-composition is an
equality if $\cM$ has a right inverse $\cM'\in\mathfrak{F}$, i.e.~$\cM\circ\cM'=\id$.
Third, monotonicity under tensoring is an equality if the resource theory has free states, and in fact is only needed with respect to $\id$ as the rest the left and right composition with free map
$\id\otimes\cM$.

%\textcolor{red}{Show that tensoring is only needed with respect to $\id$, as the rest follows from left and right composition with free maps $\id\otimes\cM$...}

%\textcolor{red}{Show that monotonicity under left-composition is an equality if $\cM$ has a left inverse $\cM'\in\mathfrak{F}$, i.e.~$\cM'\circ\cM=\id$; likewise, monotonicity under right-composition is an equality if $\cM$ has a right inverse $\cM'\in\mathfrak{F}$, i.e.~$\cM\circ\cM'=\id$. Furthermore, monotonicity under tensoring is an equality if the resource theory has free states.}

\subsection{General measures}
Now we discuss several general constructions of resource measures 
that only depend on $\mathfrak{F}$ (not on free states or resource destroying maps), 
namely the generating-power-type measures, distance measures and 
robustness-type measures.  
The first two are peculiar as they depart from an already existing
monotone on states and extend its domain in a canonical way to channels.
The third is a purely geometric construction, which however are not built upon state measures.

%We show that they indeed exhibit the properties as proper resource measures of channels. 

%\textcolor{red}{Note that often we may want to maximize over input  states on extended spaces, like in diamond norm. This is necessary to obtain monotonicity under tensoring with free maps...}

\subsubsection{Generating power}
An intuitive way to lift resource measures of states to channels is to 
consider the maximum increase, or ``generation'', of the resource, as measured by 
some faithful resource monotone on states $\omega$ (monotone under $\mathfrak{F}$) 
induced by the action of the channel. 
 
More explicitly, for channel $\cN:A\longrightarrow B$, the \emph{increasing power} 
of $\cN$ (without auxiliary systems) and the complete version (with auxiliary systems), given by monotone $\omega$,  are defined as
\begin{align}
  \Omega_{\mathsf{ip},\omega}(\cN)   &:= \sup_{\rho} \big\{\omega\bigl(\cN(\rho)\bigr)-\omega(\rho)\big\},\\ 
  \Omega^*_{\mathsf{ip},\omega}(\cN) &:= \sup_{\rho} \big\{\omega\bigl(\cN\otimes\id(\rho)\bigr)-\omega(\rho)\big\},
\end{align}
where the optimization for $R$ is over $\rho$ on $A$, and the optimization for 
$\Omega^*$ is implicitly over all auxiliary systems $C$ and $\rho$ on $A\otimes C$.  
 
It is also sensible to consider the \emph{generating power} and its complete version, which measure the maximum amount of resource that can be created from free states (if they exist for the resource theory under study, otherwise such quantities are undefined). Now the optimization is over free states only:
\begin{align}
  \Omega_{\mathsf{gp},\omega}(\cN)   &:= \sup_{\rho \text{ s.t.}\, \omega(\rho)=0} \omega\bigl(\cN(\rho)\bigr),\\ 
  \Omega^*_{\mathsf{gp},\omega}(\cN) &:= \sup_{\rho \text{ s.t.}\, \omega(\rho)=0} \omega\bigl(\cN\otimes\id(\rho)\bigr).
\end{align}

In general, these quantities could be different, while we always have $\Omega_{\mathsf{gp},\omega}\leq \Omega_{\mathsf{ip},\omega}, \Omega^*_{\mathsf{gp},\omega}\leq \Omega^*_{\mathsf{ip},\omega}$ and $\Omega_{\mathsf{gp},\omega}\leq \Omega^*_{\mathsf{gp},\omega}, \Omega_{\mathsf{ip},\omega}\leq \Omega^*_{\mathsf{ip},\omega}$.     
%{\color{red} It is not so clear why entangled inputs are necessary?}
In some cases the equality may hold.  For example, if $\omega$ is given by a distance measure which obeys triangle and data processing inequalities to the set of free states,  it holds that  $\Omega_{\mathsf{gp},\omega}= \Omega_{\mathsf{ip},\omega}$ \cite{channel_discrimination}.

Now we show that the complete increasing and generating powers are indeed proper 
measures of resource in quantum channels, satisfying the basic axioms.
\begin{thm}
If $\omega$ is a resource monotone on states, then
$\Omega_{\mathsf{ip},\omega}$, $\Omega^*_{\mathsf{ip},\omega}$, $\Omega_{\mathsf{gp},\omega}$, 
and $\Omega^*_{\mathsf{gp},\omega}$ are normalized, and monotone under left and 
right compositions with free channels.

The complete versions, $\Omega^*_{\mathsf{ip},\omega}$ and $\Omega^*_{\mathsf{gp},\omega}$,
are furthermore monotone (invariant if the resource theory has free states, and $\omega$ satisfies the reasonable condition that $\omega(\rho\otimes\sigma) = \omega(\rho)$ for any free state $\sigma$) under tensoring with free channels, and hence
can be called resource monotones for channels.
\end{thm}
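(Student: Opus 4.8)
The plan is to reduce every assertion to two input properties of the state monotone $\omega$: its monotonicity under free channels, $\omega(\cF(\rho))\le\omega(\rho)$ for $\cF\in\mathfrak{F}$, and its faithfulness together with nonnegativity, $\omega\ge 0$ with $\omega(\rho)=0$ exactly on free states. I would first dispose of normalization. For nullity, if $\cN\in\mathfrak{F}$ then monotonicity of $\omega$ gives $\omega(\cN(\rho))-\omega(\rho)\le 0$ for every $\rho$, so the supremum defining $\Omega_{\mathsf{ip},\omega}(\cN)$ is $\le 0$; nonnegativity follows by evaluating the difference at (a sequence of states approaching) the minimizer of $\omega$, where $\omega(\cN(\rho))\ge\inf_\sigma\omega(\sigma)=\omega(\rho)$, so the supremum is $\ge 0$. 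For the generating power the sup runs only over free $\rho$, where $\omega(\cN(\rho))\le\omega(\rho)=0$ and $\omega\ge 0$ force $\omega(\cN(\rho))=0$, giving nullity, while nonnegativity is immediate from $\omega\ge 0$. The complete versions are handled identically once one observes that $\cN\in\mathfrak{F}$ implies $\cN\otimes\id\in\mathfrak{F}$ by axioms~1 and~3, so that $\omega\bigl((\cN\otimes\id)(\rho)\bigr)\le\omega(\rho)$ still holds.

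Next I would prove monotonicity under left and right composition with a free $\cM$, a short computation in each case. For left composition I bound, pointwise in $\rho$, $\omega(\cM(\cN(\rho)))\le\omega(\cN(\rho))$ and subtract $\omega(\rho)$, so the supremum only decreases; for the generating power the same pointwise inequality is used on the free states. For right composition the trick is a change of variable: since $\cM$ is free, $-\omega(\rho)\le-\omega(\cM(\rho))$, hence $\omega(\cN(\cM(\rho)))-\omega(\rho)\le\omega(\cN(\rho'))-\omega(\rho')$ with $\rho'=\cM(\rho)$; as $\rho'$ ranges over a subset of all states the supremum is bounded by $\Omega_{\mathsf{ip},\omega}(\cN)$. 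For the generating-power case one additionally notes that a free $\cM$ maps free $\rho$ to free $\rho'$, so the feasible set of the outer problem only shrinks. The complete versions go through verbatim after replacing $\cM$ by $\cM\otimes\id$, which is again free and satisfies $(\cM\circ\cN)\otimes\id=(\cM\otimes\id)\circ(\cN\otimes\id)$ and similarly on the right.

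The substantive step, and the one I expect to be the main obstacle, is monotonicity under tensoring with a free $\cM:A_2\to B_2$, which is why it is asserted only for the complete measures. The key manoeuvre is the factorization
\[
  \cN\otimes\cM\otimes\id_C \;=\; (\cN\otimes\id_{B_2 C})\circ(\id_A\otimes\cM\otimes\id_C),
\]
in which the right-hand factor is free, so it can only decrease $\omega$ on the input, while the output is exactly $\cN$ tensored with the identity on the enlarged auxiliary $C'':=B_2\otimes C$. Writing $\rho'=(\id_A\otimes\cM\otimes\id_C)(\rho)$ and using $\omega(\rho')\le\omega(\rho)$, one gets $\omega\bigl((\cN\otimes\cM\otimes\id_C)(\rho)\bigr)-\omega(\rho)\le\omega\bigl((\cN\otimes\id_{C''})(\rho')\bigr)-\omega(\rho')$, and since the complete measure already optimizes over \emph{all} auxiliary systems the right-hand side is bounded by $\Omega^*_{\mathsf{ip},\omega}(\cN)$; the generating-power version is the same with all states free (using that free states on $A_2$ exist by axiom~5 and that $\cM$ preserves freeness). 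Recognizing that $\cM$'s output register $B_2$ must be absorbed into the auxiliary of $\cN$ is the crux. Finally, for the invariance claim I would supply the matching lower bound: feed $\cN\otimes\cM$ the product input $\rho_{AC}\otimes\xi_{A_2}$ with $\xi$ a free state of $A_2$; then the output is $(\cN\otimes\id_C)(\rho_{AC})\otimes\cM(\xi)$ with $\cM(\xi)$ free, and the hypothesis $\omega(\,\cdot\otimes\text{free})=\omega(\,\cdot\,)$ collapses both the output and input $\omega$-values to those of the $\cN$-only problem, yielding $\Omega^*(\cN\otimes\cM)\ge\Omega^*(\cN)$ and hence equality.
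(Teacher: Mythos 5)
Your proof is correct and follows essentially the same route as the paper's: pointwise monotonicity of $\omega$ for left composition, a change of variables (equivalent to the paper's add-and-subtract step, whose second supremum is nonpositive) for right composition, the factorization $\cN\otimes\cM=(\cN\otimes\id)\circ(\id\otimes\cM)$ with $\cM$'s output register absorbed into the auxiliary system for tensoring, and a free product input for the invariance lower bound. The only cosmetic difference is in establishing $\Omega_{\mathsf{ip},\omega}\geq 0$, where you evaluate at near-minimizers of $\omega$ while the paper evaluates at a fixed point of $\cN$; both work.
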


\begin{proof}
We examine the properties one by one. 
%for $\Omega_{\mathsf{ip},\omega}$ and $\Omega_{\mathsf{gp},\omega}$. The proofs directly generalize to the extended versions by inserting $\otimes I$. 
\begin{enumerate}
    \item Normalization. 
      Since $\omega\bigl(\cN(\rho)\bigr) = \omega(\rho)$ must hold for fixed points of 
      $\cN$, which always exists, $\Omega_{\mathsf{ip},\omega}(\cN)\geq 0$.
      Suppose $\cN\in\mathfrak{F}$. Then by the monotonicity of $\omega$ under 
      $\mathfrak{F}$, we have $\omega\bigl(\cN(\rho)\bigr) - \omega(\rho) \leq 0$ 
      for all $\rho$.  So $\Omega_{\mathsf{ip},\omega}(\cN) =  0$.  
      Of course $\omega\bigl(\cN(\rho)\bigr)=0$ when already assuming $\omega(\rho)=0$, 
      so by definition, $\Omega_{\mathsf{gp},\omega}(\cN) =  0$.  
      The proof directly generalizes to the complete versions 
      $\Omega^*_{\mathsf{ip},\omega}(\cN),\Omega^*_{\mathsf{gp},\omega}(\cN)$ by 
      applying the above arguments to $\cN\otimes\id$ and invoking its freeness 
      when $\cN\in\mathfrak{F}$ (due to axioms 1 and 3 of free channels).

    \item Monotonicity. 
      Suppose $\cM\in\mathfrak{F}$.  %For any $\cN$,
      \begin{enumerate}
        \item Left composition: 
          \begin{align}
            \Omega_{\mathsf{ip},\omega}(\cM\circ \cN) 
               &= \sup_\rho \big\{\omega\bigl(\cM\circ \cN(\rho)\bigr)-\omega(\rho)\big\} 
                = \omega\bigl(\cM\circ \cN(\tilde\rho)\bigr)-\omega(\tilde\rho) \\ 
               &\leq \omega\bigl(\cN(\tilde\rho)\bigr)-\omega(\tilde\rho)\\
               &\leq\sup_\sigma \big\{\omega\bigl(\cN(\sigma)\bigr)-\omega(\sigma)\big\}
               = \Omega_{\mathsf{ip},\omega}(\cN),
          \end{align}
          where in the first line $\tilde\rho$  attains the sup, and the second line 
          follows from the monotonicity of $\omega$.
          The proof for $\Omega_{\mathsf{gp},\omega}(\cM\circ \cN)\leq \Omega_{\mathsf{gp},\omega}(\cN)$ 
          directly follows by ignoring the negative term.
          The proof for $\Omega^*_{\mathsf{ip},\omega}, \Omega^*_{\mathsf{gp},\omega}$ 
          directly follows by applying the above arguments to the extended channels.

    \item Right composition:
      \begin{align}
        \Omega_{\mathsf{ip},\omega}(\cN\circ \cM) 
          &= \sup_\rho \big\{\omega\bigl(\cN\circ \cM(\rho)\bigl)-\omega(\rho)\big\} \\
          &\leq \sup_\rho \big\{\omega\bigl(\cN\circ \cM(\rho)\bigr)-\omega\bigl(\cM(\rho)\bigr)\big\}
                 + \sup_\sigma \big\{\omega\bigl(\cM(\sigma)\bigr)-\omega(\sigma)\big\}\\
          &=\sup_\rho \big\{\omega\bigl(\cN\circ \cM(\rho)\bigr)-\omega\bigl(\cM(\rho)\bigr)\big\}\\
          &\leq\sup_{\rho'} \big\{\omega\bigl(\cN(\rho')\bigr)-\omega(\rho')\big\}
           = \Omega_{\mathsf{ip},\omega}(\cN),
       \end{align}
       where the third line follows from that the latter sup in the second line is never 
       positive by monotonicity and attains zero by fixed points of $\cM$ or free states. 
       \begin{equation}
         \Omega_{\mathsf{gp},\omega}(\cN\circ \cM) 
           =    \sup_{\rho \text{ s.t.}\, \omega(\rho)=0} \omega\bigl(\cN\circ \cM(\rho)\bigr) 
           \leq \sup_{\sigma \text{ s.t.}\, \omega(\sigma)=0} \omega\bigl(\cN(\sigma)\bigr) 
           =    \Omega_{\mathsf{gp},\omega}(\cN),
       \end{equation}
       since $\omega\bigl(\cM(\rho)\bigr)=0$ given that $\omega(\rho)=0$.
       The proof for $\Omega^*_{\mathsf{ip},\omega}, \Omega^*_{\mathsf{gp},\omega}$ 
       directly follows by applying the above arguments to the extended channels.

    \item Tensor product:
      \begin{align}
        \Omega^*_{\mathsf{ip},\omega}(\cN\otimes \cM) 
           &= \Omega^*_{\mathsf{ip},\omega}\bigl((\cN\otimes \id)\circ(\id\otimes\cM)\bigr) \\
           &\leq \Omega^*_{\mathsf{ip},\omega}(\cN\otimes \id)\\
           &\leq \Omega^*_{\mathsf{ip},\omega}(\cN),
      \end{align}
      where the second line follows from $\id\otimes\cM\in\mathfrak{F}$ and monotonicity 
      under composition, and the third line follows from the definition of $\Omega^*_{\mathsf{ip},\omega}$.
      The same chain of inequalities directly applies to $\Omega^*_{\mathsf{gp},\omega}$ 
      (but not the non-complete $\Omega_{\mathsf{ip},\omega},\Omega_{\mathsf{gp},\omega}$, 
      since the last inequality does not necessarily hold).\\
      Under the assumptions that the resource theory has free states, and $\omega$ 
      satisfies $\omega(\rho\otimes\sigma) = \omega(\rho)$ for any free state $\sigma$, 
      we can further obtain equality since 
      \begin{align}
        \Omega^*_{\mathsf{ip},\omega}(\cN\otimes \cM) 
           &\geq  \omega\bigl(\cN\otimes\id_{\tilde C}(\tilde\rho)\otimes \cM(\sigma)\bigr) 
                  - \omega(\tilde\rho\otimes \sigma) \\
           &= \omega\bigl(\cN\otimes\id_{\tilde C}(\tilde\rho)\bigr) - \omega(\tilde\rho) 
            = \Omega^*_{\mathsf{ip},\omega}(\cN).
      \end{align}
      where the auxiliary space $\tilde C$ and $\tilde\rho\in\mathcal{S}(A\otimes \tilde C)$ 
      attain the sup in $\Omega^*_{\mathsf{ip},\omega}(\cN)$ and $\sigma$ is a free state. 
      The second line follows from the assumption on $\omega$.   
      Similarly,
      \begin{align}
        \Omega^*_{\mathsf{gp},\omega}(\cN\otimes \cM)
          \geq \omega\bigl(\cN\otimes\id_{\tilde C'}(\tilde\sigma)\otimes \cM(\sigma)\bigr)
          =    \omega\bigl(\cN\otimes\id_{\tilde C'}(\tilde\sigma)\bigr) 
          = \Omega^*_{\mathsf{gp},\omega}(\cN),
      \end{align}
      where the auxiliary space $\tilde C'$ and free state $\tilde\sigma\in\mathcal{S}(A\otimes \tilde C')$
      attain $\Omega^*_{\mathsf{gp},\omega}(\cN)$.
\end{enumerate}
%\item $\Omega_{\mathsf{ip},\omega}(N\otimes M) = \Omega_{\mathsf{ip},\omega}(N)$ and $\Omega_{\mathsf{gp},\omega}(N\otimes M) = \Omega_{\mathsf{gp},\omega}(N)$, when $M\in\mathfrak{F}$. 
%``$\leq$'' for both is directly seen by considering $N\otimes M = (N\otimes I)\circ(I\otimes M)$, where $(I\otimes M)\in\mathfrak{F}$ by the free axioms,  and apply the monotonicity results. 
%``$\geq$'': suppose $\tilde\rho$ achieves the sup in $\Omega_{\mathsf{ip},\omega}(N)$, then $\omega(N\otimes M(\rho\otimes I)) - \Omega(\rho\otimes I) \geq C(N(\rho)) - C(\rho) = r_{g,C}(N)$ given $C(I)=0$.
\end{enumerate}
%This concludes the proof...
\end{proof}

Note that the increasing and generating powers are not faithful when $\mathfrak{F}$ 
is not maximal (in the sense discussed in Section \ref{sec:free}): 
they vanish for channels belonging to the extended maximal set of 
non-generating maps but not $\mathfrak{F}$.

Some of these quantities have already been studied for certain resources, such as bipartite entanglement \cite{bip}, coherence \cite{PhysRevA.92.032331,Garcia-Diaz:2016:NCP:3179439.3179441,BU20171670,coh_c,coh_mio}, and thermodynamic nonequilibrium \cite{nanothermo,PhysRevLett.111.250404,thermal_c,FaistRenner:thermo-cost,FaistBertaTomamichel:thermo-cost}.  
In particular, there are known operational interpretations for a few cases. For example:
\begin{itemize}
    \item In the resource theory of coherence, the capacity of generating maximally 
          coherent states by IO \cite{coh_c} and MIO \cite{coh_mio} is given by 
          $\Omega^*_{\mathsf{ip}}$ in terms of the relative entropy of coherence 
          (see Example \ref{exp:mio});
    \item In the resource theory of thermal nonequilibrium, the unique measure that 
          characterizes the reversible rate of work extraction and work cost of channels 
          is $\Omega^*_{\mathsf{ip}}=\Omega_{\mathsf{ip}}$
          in terms of the free energy, which is proportional to the relative entropy 
          distance from the Gibbs state \cite{FaistRenner:thermo-cost} 
          (see Example \ref{exp:thermo});
    \item In general, $\Omega_{\mathsf{ip}}=\Omega_{\mathsf{gp}}$ in terms of the trace norm 
          distance of resource characterizes the success probability of discriminating 
          resourceful channels from free ones \cite{channel_discrimination}.
\end{itemize}

Instead of maximizing over all inputs, it has also been proposed to consider the average over random inputs \cite{PhysRevA.62.030301,PhysRevA.95.052306}, but this will in general not yield a monotone.

\subsubsection{Distance measures}
Another strategy that can always be implemented is to consider an appropriate
distance measures, now defined on channels.
The minimum distance to $\mathfrak{F}$ intuitively captures in some sense
the ``resourceness'' of a channel.   
That is, we may consider 
\begin{equation}
  \Delta_\delta(\cN) := \inf_{\cL\in\mathfrak{F}} \delta(\cN,\cL),
\end{equation}
given by some distance measure between channels $\delta$, which has to
satisfy certain properties to yield a resource measure.

A fundamental example is the diamond norm, aka completely bounded trace norm:
\begin{equation}
    \Delta_\diamond(\cN) = \inf_{\cL\in\mathfrak{F}} \|\cN-\cL\|_\diamond 
              = \inf_{\cL\in\mathfrak{F}}\sup_\rho\|\cN\otimes\id(\rho) - \cL\otimes\id(\rho)\|_1.
\end{equation}
Note that here the extended space is important to relate the diamond norm to the task of channel discrimination, since entangled inputs may help \cite{watrous,piani-watrous}.

More generally, let us consider defining $\delta$ by lifting distance measures 
from states as follows:
\begin{align}
  \Delta_{(d)}(\cN)   &:= \inf_{\cL\in\mathfrak{F}}\sup_{\rho}d\bigl(\cN(\rho),\cL(\rho)\bigr),\\
  \Delta_{(d)}^*(\cN) &:= \inf_{\cL\in\mathfrak{F}}\sup_{\rho}d\Bigl(\cN\otimes\id(\rho),\cL\otimes\id(\rho)\Bigr),
\end{align}
where $d$ is some contractive distance measure on states (that is nonnegative and 
vanishes when the arguments are equal).  Here $d$ can be e.g.~the trace norm of the 
difference or a relative entropy.  

Such distance-type measures also satisfy the basic conditions for resource measures of channels: 
\begin{thm}\label{thm:distance}
If the distance measure $d$ obeys the data processing inequality, then
$\Delta_{(d)}$ and $\Delta_{(d)}^*$ are faithful, and monotone under left and right compositions with free channels.

Furthermore, $\Delta_{(d)}^*$ is also monotonic (invariant if the resource theory has free states) under tensoring with free channels.
\end{thm}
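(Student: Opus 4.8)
The plan is to verify the four claimed properties in turn, and to obtain the tensoring statement by reusing the two composition monotonicities, exactly mirroring the structure of the generating-power theorem. Throughout I will use that $d$ is nonnegative, separates points (which holds for the trace norm and, by Klein's inequality, for the relative entropy), and satisfies the data-processing inequality $d\bigl(\Phi(\alpha),\Phi(\beta)\bigr)\le d(\alpha,\beta)$ for every channel $\Phi$.

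\emph{Faithfulness.} The easy direction is immediate: if $\cN\in\mathfrak{F}$, taking $\cL=\cN$ gives $\sup_\rho d\bigl(\cN(\rho),\cN(\rho)\bigr)=0$, hence $\Delta_{(d)}(\cN)=0$, and the same with $\cN\otimes\id$ gives $\Delta_{(d)}^*(\cN)=0$. For the converse I would argue by attainment and closedness. In finite dimension $\mathfrak{F}(A\longrightarrow B)$ is a closed (axiom~2) subset of the compact set of channels, and $\cL\mapsto\sup_\rho d\bigl(\cN(\rho),\cL(\rho)\bigr)$ is lower semicontinuous (a supremum of lower semicontinuous functions of $\cL$); so if $\Delta_{(d)}(\cN)=0$ the infimum is attained at some $\cL^\ast\in\mathfrak{F}$ with $\sup_\rho d\bigl(\cN(\rho),\cL^\ast(\rho)\bigr)=0$. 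Since $d$ separates points this forces $\cN(\rho)=\cL^\ast(\rho)$ for all $\rho$, i.e.\ $\cN=\cL^\ast\in\mathfrak{F}$. The same argument applied to $\cN\otimes\id$ treats $\Delta_{(d)}^*$. (In the non-attained/infinite-dimensional case one instead takes a minimizing sequence $\cL_n$ and uses that $d$ dominates a genuine metric, e.g.\ via Pinsker, so $\cL_n\to\cN$ and closedness applies.)

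\emph{Composition monotonicity.} For left composition with free $\cM$, I would use the candidate $\cM\circ\cL\in\mathfrak{F}$ (free by axiom~1) together with data processing for the channel $\cM$: for every $\rho$, $d\bigl(\cM\circ\cN(\rho),\cM\circ\cL(\rho)\bigr)\le d\bigl(\cN(\rho),\cL(\rho)\bigr)$. Taking the supremum over $\rho$ and then the infimum over $\cL\in\mathfrak{F}$ gives $\Delta_{(d)}(\cM\circ\cN)\le\Delta_{(d)}(\cN)$. For right composition the candidate is $\cL\circ\cM\in\mathfrak{F}$, and here no data processing is needed: as $\rho$ ranges over all states, $\cM(\rho)$ ranges only over a subset of $\cS$, so $\sup_\rho d\bigl(\cN(\cM(\rho)),\cL(\cM(\rho))\bigr)\le\sup_\sigma d\bigl(\cN(\sigma),\cL(\sigma)\bigr)$, and the infimum over $\cL$ yields $\Delta_{(d)}(\cN\circ\cM)\le\Delta_{(d)}(\cN)$. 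Both arguments carry over to $\Delta_{(d)}^*$ by replacing each map with its $\id_R$-extension (using data processing for $\cM\otimes\id_R$ in the left case, and the same subset-of-inputs argument in the right case).

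\emph{Tensoring and invariance, and the main obstacle.} Following the generating-power theorem I would write $\cN\otimes\cM=(\cN\otimes\id)\circ(\id\otimes\cM)$ with $\id\otimes\cM\in\mathfrak{F}$, so right-composition monotonicity gives $\Delta_{(d)}^*(\cN\otimes\cM)\le\Delta_{(d)}^*(\cN\otimes\id)$; it then remains to note $\Delta_{(d)}^*(\cN\otimes\id_E)\le\Delta_{(d)}^*(\cN)$, which follows directly from the definition of the complete version by using the candidate $\cL_0\otimes\id_E$ and absorbing $E$ into the auxiliary system optimized in $\Delta_{(d)}^*(\cN)$, making the two expressions coincide. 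This establishes monotonicity. For the invariance when free states exist, I would prove the reverse inequality by exhibiting $\cN$ as a free pre/post-processing of $\cN\otimes\cM$: writing $P_\sigma$ for the preparation of a free state $\sigma$ on the input of $\cM$ (free by axiom~5, tensored with $\id$ free by axiom~1), one has $\cN=(\id\otimes\tr)\circ(\cN\otimes\cM)\circ(\id\otimes P_\sigma)$, whence left- and right-composition monotonicity give $\Delta_{(d)}^*(\cN)\le\Delta_{(d)}^*(\cN\otimes\cM)$. The two delicate points are where I expect the real work. First, the faithfulness converse needs genuine topological care: either attainment of the infimum (compactness plus lower semicontinuity of $d$) or a quantitative domination of $d$ by a norm, before closedness can be invoked. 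Second, and more substantively, the invariance does \emph{not} follow from free states alone: discarding the $\cM$-branch requires the partial trace to be free (axiom~4), since the comparison map $\tr\circ\,\cL\circ(\id\otimes P_\sigma)$ produced by the data-processing route must itself be free to serve as a candidate in $\Delta_{(d)}^*(\cN)$. I would therefore state the invariance under the joint assumption of free states (axiom~5) and free partial trace (axiom~4).
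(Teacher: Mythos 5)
Your proof is correct and follows essentially the same route as the paper's: the same candidate channels $\cM\circ\tilde{\cL}$ and $\tilde{\cL}\circ\cM$ for the two compositions, the same decomposition $\cN\otimes\cM=(\cN\otimes\id)\circ(\id\otimes\cM)$ for tensoring, and the same free-state-input-plus-discard construction for the invariance. The two caveats you flag are genuine refinements of the paper's presentation rather than deviations: the paper dismisses the faithfulness converse as holding ``by definition'' (whereas, as you note, concluding $\cN\in\mathfrak{F}$ from a vanishing infimum needs closedness of $\mathfrak{F}$ together with attainment or a metric domination of $d$), and its invariance argument implicitly requires the freeness of the partial trace, since its final step replaces $\cL$ by $\cL'=\tr_{B'}\circ\,\cL\circ(\id\otimes P_\sigma)$ and needs $\cL'\in\mathfrak{F}$, so your added hypothesis of axiom 4 alongside axiom 5 is warranted.
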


\begin{proof}
Again, we examine the properties one by one.
%we prove the properties for $\Delta_{(d)}(\cN)$; 
%the proofs directly generalize to $\Delta_{(d)}^*(\cN)$ by applying the previous argument to $\cN\otimes\id$.
\begin{enumerate}
    \item Normalization and faithfulness. By definition, $\Delta_{(d)}(\cN),\Delta_{(d)}^*(\cN) = 0$ 
    iff $\cN\in\mathfrak{F}$ and positive otherwise.
    \item Monotonicity.   Suppose $\cM\in\mathfrak{F}$, and $\tilde{\cL}\in\mathfrak{F}$ is an optimal free channel that attains the inf in $\Delta_{(d)}(\cN)$. 
\begin{enumerate}
    \item Left composition: 
%\begin{eqnarray}
%\Delta_{(d)}(M\circ N)&=&\inf_{L\in\mathfrak{F}}\sup_{\rho}d([(M\circ N)\otimes I](\rho), L\otimes I(\rho))\\&\leq& \sup_\rho d([(M\circ N)\otimes I](\rho), [(M\circ \tilde{L})\otimes I](\rho)) \\&\leq&\sup_\rho d(N\otimes I(\rho), \tilde{L}\otimes I(\rho)) = \Delta_d(N),
%\end{eqnarray}
\begin{align}
\Delta_{(d)}(\cM\circ \cN)&=\inf_{\cL\in\mathfrak{F}}\sup_{\rho}d\bigl(\cM\circ \cN(\rho), \cL(\rho)\bigr)\\&\leq \sup_\rho d\bigl(\cM\circ \cN(\rho), \cM\circ \tilde{\cL}(\rho)\bigr) = d\bigl(\cM\circ \cN(\tilde\rho), \cM\circ \tilde{\cL}(\tilde\rho)\bigr) \\&\leq d\bigl(\cN(\tilde\rho), \tilde{\cL}(\tilde\rho)\bigr) \leq \sup_\sigma d\bigl(\cN(\sigma), \tilde{\cL}(\sigma)\bigr) = \Delta_{(d)}(\cN),
\end{align}
where the second line follows from $\cM\circ \tilde{\cL}\in\mathfrak{F}$ (due to axiom 1 of free channels), and the third line follows from the data processing inequality. %Proof not relevant to $\rho$.
The proof for $\Delta_{(d)}^*$ directly follows by applying the previous argument to the extended channels.

\item Right composition: 
%\begin{eqnarray}
%\Delta_{(d)}(N\circ M)&=&\inf_{L\in\mathfrak{F}}\sup_{\rho}d([(N\circ M)\otimes I](\rho), L\otimes I(\rho))\\&\leq& \sup_\rho d([(N\circ M)\otimes I](\rho), [(\tilde{L}\circ M)\otimes I](\rho)) \\&=&\sup_\rho d(N\otimes I[(M\otimes I)\rho], (\tilde{L}\otimes I[(M\otimes I)\rho])\\
%&\leq&\sup_\sigma d(N\otimes I(\sigma), \tilde{L}\otimes I(\sigma)) = \Delta_d(N),
%\end{eqnarray}
\begin{align}
\Delta_{(d)}(\cN\circ \cM)&=\inf_{\cL\in\mathfrak{F}}\sup_{\rho}d\bigl(\cN\circ \cM(\rho), \cL(\rho)\bigr)\\&\leq \sup_\rho d\bigl(\cN\circ \cM(\rho), \tilde{\cL}\circ \cM(\rho)\bigr) 
%\\&=&\sup_\rho d(N\otimes [(M\otimes I)\rho], (\tilde{L}\otimes I[(M\otimes I)\rho])
\\&\leq\sup_\sigma d\bigl(\cN(\sigma), \tilde{\cL}(\sigma)\bigr) = \Delta_{(d)}(\cN),
\end{align}
where the second line follows from $\tilde{\cL}\circ \cM\in\mathfrak{F}$. %Do not even need data processing.
The proof for $\Delta_{(d)}^*$ directly follows by applying the previous argument to the extended channels.

\item Tensor product:
\begin{align}
    \Delta_{(d)}^*(\cN\otimes\cM) &= \Delta_{(d)}^*\bigl((\cN\otimes \id)\circ(\id\otimes\cM)\bigr) \\&\leq \Delta_{(d)}^*(\cN\otimes \id)\\
    &\leq \Delta_{(d)}^*(\cN),
\end{align}
where the second line follows from $\id\otimes\cM\in\mathfrak{F}$ and monotonicity under composition, and the third line follows from the definition of $\Delta_{(d)}^*$.   Note that the last inequality does not necessarily hold for the non-complete version $\Delta_{(d)}$.  

We can further obtain equality when the resource theory has free states (the subscripts denote the input space of channels or the space of density operators):
\begin{align}
\Delta_{(d)}^*(\cN\otimes \cM) &= \inf_{\cL\in\mathfrak{F}}\sup_{\rho}d\bigl(\cN_A\otimes \cM_B\otimes\id_C(\rho_{ABC}), \cL_{AB}\otimes\id_C(\rho_{ABC})\bigr) \\
&\geq \inf_{\cL\in\mathfrak{F}}d\bigl(\cN_A\otimes\id_{\tilde{C}}(\tilde\rho_{A\tilde{C}})\otimes \cM(\sigma), \cL_{AB}\otimes\id_{\tilde{C}}(\tilde\rho_{A\tilde{C}}\otimes \sigma_B)\bigr)\\
&\geq \inf_{\cL'\in\mathfrak{F}}d\bigl(\cN_A\otimes\id_{\tilde{C}}(\tilde\rho_{A\tilde{C}}), \cL'_{A}\otimes\id_{\tilde{C}}(\tilde\rho_{A\tilde{C}})\bigr) = \Delta_{(d)}^*(\cN),
\end{align}
where the auxiliary space $\tilde C'$ and  $\tilde\rho$  attain $\Delta_{(d)}^*(\cN)$.  The second line follows from letting $\rho_{ABC} = \tilde\rho_{A\tilde{C}}\otimes \sigma_B$, and the third line follows from monotonicity of 
$d$ under partial trace.
\end{enumerate}
%Note that $\Delta_{(d)}$ is invariant under tensor product with free channels, i.e.~$\Delta_{(d)}(N\otimes M) = \Delta_{(d)}(N)$, given $M\in\mathfrak{F}$. ``$\leq$'' is directly seen by considering $N\otimes M = (N\otimes I)\circ(I\otimes M)$, where $(I\otimes M)\in\mathfrak{F}$ by the free axioms,  and apply the monotonicity results.  ``$\geq$'': suppose $\tilde\sigma$ achieves the sup in $\Delta_{(d)}(N)$. Then
%\begin{eqnarray}
%\Delta_{(d)}(N\otimes M) &=& \inf_{L\in\mathfrak{F}}\sup_{\rho}d(N_A\otimes M_B(\rho_{AB}), L_{AB}(\rho_{AB})) \\
%&\geq& \inf_{L\in\mathfrak{F}}d(N_A\otimes M_B(\tilde\sigma_{A}\otimes I_B), L_{AB}(\tilde\sigma_{A}\otimes I_B))\\
%&\geq& \inf_{L'\in\mathfrak{F}}d(N_A(\tilde\sigma_{A}), L'_{A}(\tilde\sigma_{A})) = \Delta_{(d)}(N),
%\end{eqnarray}
%where the third line follows from the monotonicity under partial trace. Subscripts of channels denote the input spaces.
\end{enumerate}
\end{proof}

%Global Robustness:
%\begin{equation}
   % \omega(N)  = \min_{M\in\mathfrak{F}}(-\log \max \{p\in[0,1]:M=pN+(1-p)N', N' \text{cp}\}).
%\end{equation}

\subsubsection{Robustness and log-robustness}
Here we generalize another particularly important type of resource measure, 
namely robustness and log-robustness, to quantum channels.  
We first define the max-relative entropy between channels and the smoothed versions as follows:
\begin{defn}[Channel max-relative entropy]
In analogy to states, define the max-relative entropy between two cp maps 
$\cN$ and $\cM$ as
\begin{equation}
D_{\max}(\cN\|\cM)
   :=\log \min \{\lambda: \cN\leq \lambda \cM\},
\end{equation}
where the inequality sign refers to the complete-positivity order between
superoperators, meaning that the difference between r.h.s. and l.h.s. is
completely positive.
An equivalent form for channels (i.e.~cptp maps) is
\begin{equation}
  D_{\max}(\cN\|\cM) = -\log \max \{p\in[0,1]:\cM=p\cN+(1-p)\cN',\, \cN' \text{cptp}\}.
\end{equation}
%Can be infinite, when $p=0$.
%\textcolor{red}{Trace the first occurrences of this quantity, including
%\cite{coh_mio} and others...}

The \emph{$\epsilon$-smooth max-relative entropy} is defined by minimizing 
within the $\epsilon$-ball given by the diamond norm ($0\leq\epsilon\leq 1$):
\begin{equation}
  D_{\max}^\epsilon(\cN\|\cM) 
     := \inf_{\cN': \frac{1}{2}\|\cN'-\cN\|_\diamond\leq\epsilon} D_{\max}(\cN'\|\cM).
\end{equation}
\end{defn}
%Note that the channel max-relative entropy takes the following form in terms 
%of the ordinary max-relative entropy between states.
%Consider the definition $D_{\mathrm{max}}(N\|M)= -\log \sup \{p\in[0,1]:M=pN+(1-p)N', N' \text{cp}\}$.

Note that similar quantities have already appeared in \cite{coh_mio,2018arXiv180705354F}.      The relative entropies for channels are actually made in such a way as to generalize the well-known ones for states (such as the max-relative entropy $D_{\max}$ above and the relative entropy $D$ which will appear later), when states
can be viewed as cptp maps from a 1-dimensional Hilbert space.

We may also rewrite the max-relative entropy for channels in terms
of the max-relative entropy for states \cite{Datta:max}:
\begin{equation}\begin{split}
  D_{\max}(\cN\|\cM) 
      &= \log\min\left\{\lambda: \cN\otimes\id(\rho)\leq \lambda\bigl(\cM\otimes\id(\rho)\bigr), 
                                                                                             \forall\rho\right\} \\
      &= \sup_\rho \log\min\left\{\lambda:\cN\otimes\id(\rho)\leq \lambda\bigl(\cM\otimes\id(\rho)\bigr)\right\} \\
      &= \sup_\rho {D}_{\max}\bigl(\cN\otimes\id(\rho)\big\|\cM\otimes\id(\rho)\bigr).
  \label{state}
\end{split}\end{equation}
As a matter of fact, in the last line, we may choose $\rho$ to be
the maximally entangled state between the input system and an auxiliary
system of equal dimension, or indeed any pure entangled state of maximal
Schmidt rank. The reason is the Choi isomorphism, which translates 
the complete-positive order of quantum channels into semidefinite
order of matrices.   That is, $D_{\max}(\cN\|\cM)$ is simply the max-relative entropy between the Choi matrices of $\cN$ and $\cM$.

While we cannot obtain the analogous identity for the smooth max-relative entropy
of channels, we can get at least a lower bound:
\begin{equation}\begin{split}
  D_{\max}^\epsilon(\cN\|\cM) 
      &=    \inf_{\cN': \frac12\|\cN'-\cN\|_\diamond\leq\epsilon} D_{\max}(\cN'\|\cM) \\
      &=    \inf_{\cN': \frac12\|\cN'-\cN\|_\diamond\leq\epsilon} 
                \sup_\rho {D}_{\max}\bigl(\cN'\otimes\id(\rho)\|\cM\otimes\id(\rho)\bigr)         \\
      &\geq \sup_\rho \inf_{\cN': \frac12\|\cN'-\cN\|_\diamond\leq\epsilon} 
                          {D}_{\max}\bigl(\cN'\otimes\id(\rho)\|\cM\otimes\id(\rho)\bigr)         \\
      &\geq \sup_\rho \inf_{\sigma': \frac12\|\sigma'-\cN\otimes\id(\rho)\|_1\leq\epsilon} 
                          {D}_{\max}\bigl(\sigma'\|\cM\otimes\id(\rho)\bigr)                      \\
      &=    \sup_\rho {D}_{\max}^\epsilon\bigl(\cN\otimes\id(\rho)\|\cM\otimes\id(\rho)\bigr).
  \label{eq:state-smoothed}
\end{split}\end{equation}
Note that to obtain this inequality, we have to define the smooth
max-relative entropy with respect to an $\epsilon$-ball in trace
distance, and not, as is customary in one-shot quantum Shannon theory,
with respect to the purified distance \cite{Datta:max,Datta:max-ieee}.
Here, the third line is by the minimax inequality, and the fourth
by the definition of the diamond norm.

Now, the robustness of resource, which characterizes the smallest 
proportion of ``noise'' to be added to the resource to create a
free state, can be directly generalized to quantum channels. 
We define the robustness of channels in the following, as well as its 
variant, the log-robustness, which is equivalent to the minimum 
max-relative entropy with respect to free channels, i.e.~$\Delta_{D_{\max}}$.

\begin{defn}[Resource robustness and log-robustness of channels]
The (global) robustness of a channel $\cN$ is defined as
\begin{equation}
%\begin{split}
    R(\cN) := \min\left\{s\geq 0: 
                       \frac{1}{1+s}\cN + \frac{s}{1+s}\cN' \in \mathfrak{F},\ \cN' \text{ cptp} \right\}.
%\end{split}
\end{equation}

The log-robustness of $\cN$ is defined as
\begin{align}
  LR(\cN) &:= \log(1+R(\cN)) \\
        &=  -\log \max \{ p\in[0,1] : p\cN+(1-p)\cN' \in \mathfrak{F},\, \cN' \text{ cptp} \} \\
        &=  \min_{\cM\in\mathfrak{F}} D_{\max}(\cN\|\cM),
\end{align}
where $R(\cN)$ is the robustness of $\cN$.

Furthermore, the $\epsilon$-smooth robustness and log-robustness are given by
\begin{align}
  R^\epsilon(\cN)  &= \inf_{\cN': \frac12\|\cN'-\cN\|_\diamond\leq\epsilon} R(\cN'), \\
  LR^\epsilon(\cN) &= \inf_{\cN': \frac12\|\cN'-\cN\|_\diamond\leq\epsilon} LR(\cN') \notag\\
                   &= \inf_{\cN': \frac12\|\cN'-\cN\|_\diamond\leq\epsilon} 
                                       \min_{\cM\in\mathfrak{F}} D_{\max}(\cN'\|\cM) \notag\\
                   &= \min_{\cM\in\mathfrak{F}} D_{\max}^\epsilon(\cN\|\cM).
\end{align}
\end{defn}
Both are convex optimization problems (except for the logarithm), if the free sets $\mathfrak{F}(A\longrightarrow B)$
are convex. Indeed, the objective function is linear, with constraints expressed 
by semidefinite conditions and membership in the convex cone
$\RR_{\geq 0}\mathfrak{F}(A\longrightarrow B)$.

It is straightforward to confirm that robustness and log-robustness of channels 
satisfy the necessary conditions for resource measures of channels:
\begin{prop}
\label{prop:robustness}
Both $R(\cN)$ and $LR(\cN)$ %and their smoothed versions $R^\epsilon(N),LR^\epsilon(N)$ 
are faithful, and monotone under left and right compositions,
and invariant under tensor product with free channels if the resource
theory has free states.
\end{prop}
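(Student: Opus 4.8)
The plan is to verify the three required properties directly for the robustness $R$ and then transfer them to $LR$ through the relation $LR(\cN)=\log(1+R(\cN))$, which is a monotone bijection and hence preserves vanishing, nonnegativity, monotonicity and invariance. Fix a channel $\cN$ and, assuming $R(\cN)<\infty$ (the bounds being trivial otherwise), let $s=R(\cN)$ be attained by a cptp map $\cN'$ with $\cF:=\frac{1}{1+s}\cN+\frac{s}{1+s}\cN'\in\mathfrak{F}$; the optimum is attained because $\mathfrak{F}$ is topologically closed (axiom~2) and the feasible set is compact.

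Faithfulness is read straight off the definition: if $\cN\in\mathfrak{F}$ then $s=0$ with $\cN'=\cN$ is feasible, so $R(\cN)=0$, while $R(\cN)=0$ forces $\cF=\cN\in\mathfrak{F}$; thus $R(\cN)=0\iff\cN\in\mathfrak{F}$ and $R(\cN)>0$ otherwise. The composition monotonicities then follow by composing the witnessing identity with the free map $\cM$ and reading off a feasible decomposition for the composite. For left composition, $\cM\circ\cF=\frac{1}{1+s}(\cM\circ\cN)+\frac{s}{1+s}(\cM\circ\cN')$, where $\cM\circ\cF\in\mathfrak{F}$ by closure under composition (axiom~1) and $\cM\circ\cN'$ is cptp, giving $R(\cM\circ\cN)\leq s$. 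Using $\cF\circ\cM$ gives the right-composition bound, and using $\cF\otimes\cM$ together with closure under tensor product gives $R(\cN\otimes\cM)\leq R(\cN)$, i.e.\ monotonicity under tensoring with a free channel.

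To upgrade this last bound to invariance when free states exist, I would first observe that the three monotonicities combine to show that $R$ does not increase under any $\mathfrak{F}$-simulation (Definition~\ref{def:F-simulation}): for a decomposition $\cD\circ(\cN\otimes\id_C)\circ\cE$ with free $\cE$ and $\cD$, successive application of left composition (with $\cD$), right composition (with $\cE$) and tensoring (with $\id_C$) bounds its robustness by $R(\cN)$. It then suffices to present $\cN$ as such a simulation of $\cN\otimes\cM$: with a free state $\sigma\in\mathfrak{F}(\CC\longrightarrow A_2)$ fed into the input of $\cM$ and its output discarded, $(\id_{B_1}\otimes\tr_{B_2})\circ(\cN\otimes\cM)\circ(\id_{A_1}\otimes\sigma)=\cN$, so $R(\cN)\leq R(\cN\otimes\cM)$ and equality follows. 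Alternatively, since Eq.~(\ref{state}) gives $LR(\cN)=\min_{\cM\in\mathfrak{F}}D_{\max}(\cN\|\cM)=\Delta^*_{(D_{\max})}(\cN)$ and $D_{\max}$ obeys the data-processing inequality, the whole proposition for $LR$ is an instance of Theorem~\ref{thm:distance} with $d=D_{\max}$, and the reverse tensor inequality runs exactly as there.

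The one point needing care is precisely this reverse inequality: discarding the output of $\cM$ must itself be a free operation, so that $\id_{B_1}\otimes\tr_{B_2}$ (equivalently the decoding $\cD$) is legitimate. This is where one uses that, in the theories under consideration, the existence of free states is paired with freeness of the partial trace (the ``dual'' axiom~4); granting that, the simulation closes and the remaining verifications are routine.
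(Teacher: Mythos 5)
Your proposal is correct, and its main line of argument is genuinely different from the paper's. The paper proves the proposition in one stroke by identifying $LR(\cN)=\min_{\cM\in\mathfrak{F}}D_{\max}(\cN\|\cM)=\Delta^*_{(D_{\max})}(\cN)$ via Eq.~(\ref{state}) and then invoking Theorem~\ref{thm:distance} together with the known data-processing inequality for $D_{\max}$; the properties of $R$ itself are left implicit, following because $R=2^{LR}-1$ is a monotone function of $LR$. You instead work directly with the defining convex decomposition $\frac{1}{1+s}\cN+\frac{s}{1+s}\cN'\in\mathfrak{F}$, pushing the optimal witness through $\cM\circ(\cdot)$, $(\cdot)\circ\cM$ and $(\cdot)\otimes\cM$ and reading off a feasible decomposition each time; this is more elementary (it uses only the closure axioms on $\mathfrak{F}$, not any property of $D_{\max}$), it treats $R$ on an equal footing with $LR$, and it makes explicit the useful intermediate fact that $R$ is non-increasing under any $\mathfrak{F}$-simulation in the sense of Definition~\ref{def:F-simulation}. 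You also correctly note the paper's route as an alternative. Your observation that the reverse tensor inequality needs the partial trace (and the free-state preparation) to be free operations is well taken: the same requirement is hidden in the last step of the paper's proof of Theorem~\ref{thm:distance}, where the reduced channel $\tr_{B'}\circ\,\cL_{AB}\circ(\id_A\otimes\sigma)$ must itself lie in $\mathfrak{F}$ for the infimum bound to close, even though the proposition's statement only mentions the existence of free states. The only minor point worth tightening is the attainment of the minimum in the definition of $R$: for faithfulness one does not strictly need compactness, since a sequence $s_n\to 0$ of feasible values already forces $\cN\in\mathfrak{F}$ by topological closedness of $\mathfrak{F}$; but your compactness argument is also fine in finite dimensions.
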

\begin{proof}
We have $LR(\cN) = \min_{M\in\mathfrak{F}} D_{\max}(\cN\|\cM) 
               = \Delta_{D_{\max}} = \Delta^*_{({D}_{\max})}$,
where the last step follows from Eq.~(\ref{state}).  
It is known that ${D}_{\max}(\rho\|\sigma)$ is nonnegative, 
vanishes when $\rho=\sigma$, and satisfies the data processing inequality 
\cite{Datta:max-ieee}. Therefore, by Theorem \ref{thm:distance}, 
$LR(\cN)=\Delta^*_{({D}_{\max})}$ satisfies the conditions.
\end{proof}

Remarkably, log-robustness is known as a particularly important characterization of 
general operational tasks in state resource theories:
%\textcolor{red}{Discuss the previous occurrences of the log-robustness as characterization of an operational task.
\begin{itemize}
    \item The asymptotic transformation rate in the reversible framework allowing the maximal set of 
          free operations \cite{bg} (prefigured in the reversible theory of entanglement 
          \cite{BrandaoPlenio:ent-reversible,BrandaoPlenio:ent-reversible:long});
    \item The one-shot formation cost \cite{1rt};
    \item The catalytic resource erasure cost \cite{AnshuHsiehJain:erasure} (which we will 
          generalize to channels in the next section);
    \item The advantage of resource states in channel discrimination tasks \cite{2018arXiv180901672T}.
\end{itemize}
%Namely, for states in the theory of Brandao-Gour, prefigured in the reversible theory of entanglement by Brandao-Plenio, as well as in the resource erasure of states by Anshu et al., which we will generalize to channel in the next section.
%
Furthermore, we know already that the smooth log-robustness for channels characterizes the one-shot simulation cost of channels in the following cases (see examples in Section \ref{sec:asymptotics} for more detailed discussions):
\begin{itemize}
    \item The resource theory of coherence with MIO as free channels \cite{coh_mio};
    \item  The resource theory of thermal nonequilibrium with Gibbs-preserving maps 
           as free channels \cite{FaistRenner:thermo-cost,FaistBertaTomamichel:thermo-cost},
           cf.~\cite{Faist:personal}.
\end{itemize}

In the next section, we shall establish a generic operational interpretation of channel 
log-robustness via the generalized task of resource erasure.

\section{One-shot resource erasure}
\label{sec:erasure}
Following Groisman \emph{et al.} \cite{GroismanPopescuWinter}, we consider
the question of how much randomness is required to turn a given resource
channel $\cN$ into a resource free channel $\cM \in \mathfrak{F}$. 
To formulate this as a well-defined problem, we adapt the framework 
of catalytic erasure, due to Anshu \emph{et al.} \cite{AnshuHsiehJain:erasure},
from states to channels.
The idea is to allow only free and at the same time reversible 
operations, or rather an ensemble of such operations, such that their
average turns $\cN$ into a free $\cM\in\mathfrak{F}$, or to an approximation 
of such an $\cM$. 
The randomness cost is then the logarithm of the number of elements in
the ensemble, or alternatively the entropy of the distribution.

%\subsection{Resource destruction cost equals log-robustness}
Concretely, to make sense to the following definitions, we shall
assume that the resource theory at hand has the property that 
tensoring with a free channel is for free, that each system
has a free state and that the trace and partial trace is free
(the latter two mean that we can take marginals over parts of a 
system); furthermore, we shall require that in the tensor product
of many copies of the same system, the permutation unitaries are free. 
%For the subsequent main theorem, we shall furthermore assume
%that the sets $\mathfrak{F}(A\longrightarrow B)$ are convex.

\begin{defn}[Catalytic resource destruction cost]
\label{defi:resource-erasure}
For a channel $\cN:A\longrightarrow B$ in a resource theory 
with free states $\mathfrak{F}$ satisfying the above assumptions,
we call an \emph{$\epsilon$-resource-destruction process} any free channel 
$\cF\in\mathfrak{F}(A'\longrightarrow B')$ together with an ensemble 
of pairs of free reversible channels $\{p_i,\cU_i^{AA'},\cV_i^{BB'}\}_{i=1}^k$
(i.e.~unitary conjugations in $\mathfrak{F}$ such that their inverses are 
in $\mathfrak{F}$, too), such that
\[
  \frac12\left\| \sum_{i=1}^k p_i \cV_i \circ (\cN\otimes\cF) \circ \cU_i - \cM \right\|_\diamond
     \leq \epsilon,
\]
for some free channel $\cM \in \mathfrak{F}(AA'\longrightarrow BB')$.

The minimum $\log k$ such that an $\epsilon$-resource-destruction
process exists, is denoted $\mathsf{COST}^\epsilon(\cN)$, and
called the $\epsilon$-resource-destruction cost.
\end{defn}

\medskip
The motivation for this definition stems from the idea that as a resource, 
$\cN$ is equivalent to $\cV_i \circ (\cN\otimes\cF) \circ \cU_i$.
Indeed, in the one direction, we can transform $\cN$ into $\cN\otimes\cF$
for free, and can compose it with the free maps $\cU_i$ and $\cV_i$.
Conversely, since the inverses of the latter two maps are assumed free, too, 
$\cV_i \circ (\cN\otimes\cF) \circ \cU_i$ can be transformed into $\cN\otimes\cF$ 
by free superchannels, and by composing with a free state on $A'$ 
and the partial trace over $B'$,
the latter is transformed to $\cN$. This means that the resource destruction
can be attributed entirely to the convex combination with probabilities
$p_i$, in other words the forgetting/erasure of $i$ \cite{GroismanPopescuWinter}.

\begin{rem}
We have used the diamond norm distance in the above definition,
to characterize the degree of approximation. As discussed earlier,
while this metric is clearly distinguished for single channel
resources, there are a range of possibilities when dealing
with multiple channels represented as a single multi-terminal
no-signalling correlation. We note that all of these, however, 
are base norms $\|\cdot\|_\bullet$ on the space of cp maps, or a linear subspace
thereof, and share the property that $\|\cN\|_\bullet = 1$ for
any cptp map or a tensor product of cptp maps. This will allow
us to bound these norms tightly under certain circumstances, even
if we are not always able to evaluate them concretely.

In particular, all of the norms discussed in Section \ref{sec:composition}
are eligible, and stronger than the diamond norm: recall that
$\|\cdot\|_{\diamond\leftrightarrow}$ and $\|\cdot\|_{\diamond ns}$
are upper bounds on $\|\cdot\|_{\diamond}$
\end{rem}

\begin{thm}
\label{thm:resource-erasure}
For any channel $\cN:A\longrightarrow B$ in a resource theory 
satisfying the assumptions 1, 2, 3, 4, 5, and 7, and any $0<\eta<\epsilon<1$,
\begin{equation}
  LR^{\mu\delta}(\cN) + \log\left(1-\frac{1}{\mu}\right)
         \leq \mathsf{COST}^\epsilon(\cN) 
         \leq LR^{\epsilon-\eta}(\cN)+2\log\frac{1}{\eta}-1,
\end{equation}
where $\delta = \sqrt{\epsilon(2-\epsilon)}$ and $\mu > 1$.
If the sets of free channels $\mathfrak{F}$ are convex, then the lower bound 
can be improved to $\mathsf{COST}^\epsilon(\cN) \geq LR^{\delta}(\cN)$.
\end{thm}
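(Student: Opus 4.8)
The plan is to prove the two bounds separately: the upper bound (achievability) rests on the abstract convex-split lemma of Appendix~\ref{app:convex-split}, and the lower bound (converse) on a sub-normalization argument combined with a change-of-reference smoothing, finished by a pigeonhole over the ensemble.

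For the \emph{upper bound} I would first smooth: pick $\cN'$ with $\frac12\|\cN'-\cN\|_\diamond\le\epsilon-\eta$ attaining $LR(\cN')=LR^{\epsilon-\eta}(\cN)$, together with a free $\cM_0$ satisfying $\cN'\le 2^{LR(\cN')}\cM_0$ in the complete-positivity order, so that $D_{\max}(\cN'\|\cM_0)=LR(\cN')$. The erasure protocol attaches the free ancilla channel $\cF=\cM_0^{\otimes(m-1)}$ (free by axioms~1 and~5) and uses the permutation conjugations $\cU_\pi,\cV_\pi$ (free and reversible by axiom~7) that place the single slot carrying the resource uniformly among the $m=k$ positions. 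By the abstract convex-split lemma, applied to the base-normed space of Hermiticity-preserving maps with the diamond norm, the uniform mixture $\frac1m\sum_{j}\cM_0^{\otimes(j-1)}\otimes\cN'\otimes\cM_0^{\otimes(m-j)}$ lies within diamond distance $\eta$ of the free channel $\cM_0^{\otimes m}$ as soon as $\log m\ge LR(\cN')+2\log\frac1\eta-1$. Running the protocol on the actual $\cN$ rather than $\cN'$ perturbs each term, hence the average, by at most $\frac12\|\cN-\cN'\|_\diamond\le\epsilon-\eta$ (tensoring and composing with cptp maps is diamond-contractive), so the triangle inequality gives total error $\le\epsilon$ and $\mathsf{COST}^\epsilon(\cN)\le\log k=LR^{\epsilon-\eta}(\cN)+2\log\frac1\eta-1$.

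For the \emph{lower bound} I would take any $\epsilon$-resource-destruction process $\Phi=\sum_{i=1}^k p_i\Phi_i$ with $\Phi_i=\cV_i\circ(\cN\otimes\cF)\circ\cU_i$ and $\frac12\|\Phi-\cM\|_\diamond\le\epsilon$, $\cM\in\mathfrak{F}$. Two facts drive the argument: (i) \emph{invariance} $LR^{r}(\Phi_i)=LR^{r}(\cN)$ for every smoothing radius $r$, since tensoring with the free $\cF$ leaves the (smooth) log-robustness unchanged when free states exist (axiom~5, Proposition~\ref{prop:robustness}), while the free reversible $\cU_i,\cV_i$ have free inverses, so the composition monotonicities become equalities and diamond distances are preserved; and (ii) the \emph{sub-normalization} $p_i\Phi_i\le\Phi$ in the complete-positivity order, giving $D_{\max}(\Phi_i\|\Phi)\le\log\frac1{p_i}$. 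It then remains to trade the non-free reference $\Phi$ for the free $\cM$. Converting the bound $\frac12\|\Phi-\cM\|_\diamond\le\epsilon$ into a fidelity/purified-distance bound of radius $\delta=\sqrt{\epsilon(2-\epsilon)}$, a change-of-reference (gentle-measurement-type) smoothing produces, for each $i$, a channel within diamond distance $\mu\delta$ of $\Phi_i$ whose $D_{\max}$ against $\cM$ is at most $\log\frac1{p_i}-\log(1-\frac1\mu)$, with the Markov parameter $\mu>1$ controlling the discarded reference mass; hence $LR^{\mu\delta}(\Phi_i)\le\log\frac1{p_i}-\log(1-\frac1\mu)$. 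A pigeonhole step ($\max_i p_i\ge 1/k$) selects an index with $\log\frac1{p_i}\le\log k$, and invariance transports the bound back to $\cN$, yielding $LR^{\mu\delta}(\cN)+\log(1-\frac1\mu)\le\log k=\mathsf{COST}^\epsilon(\cN)$. When $\mathfrak{F}$ is convex one uses $\cM$ directly as the reference and symmetrizes the smoothing, removing the discarded-mass correction and sharpening the radius to $\delta$, which gives $LR^{\delta}(\cN)\le\mathsf{COST}^\epsilon(\cN)$.

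I expect the \emph{change-of-reference smoothing} in the converse to be the main obstacle: passing from a $D_{\max}$ bound against the non-free reference $\Phi$ to one against a genuinely free $\cM$ while keeping the diamond-norm perturbation controlled is exactly where the $\sqrt{\epsilon(2-\epsilon)}$ (from the trace-to-purified-distance conversion) and the Markov trade-off $\mu$ enter, and it is the failure of convexity that forces the lossy $\log(1-\frac1\mu)$ correction rather than a clean projection onto $\mathfrak{F}$.
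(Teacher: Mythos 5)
Your achievability argument is essentially the paper's: attach $\cF_0^{\otimes(n-1)}$ for the optimal free reference $\cF_0$, permute with free transpositions, apply the abstract convex-split lemma to the smoothed $\cN'$, and absorb the $\cN$-versus-$\cN'$ discrepancy by the triangle inequality. That direction is fine.

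The converse, however, has a genuine gap at exactly the point you flag as the ``main obstacle,'' and the mechanism you propose to close it does not work. Your sub-normalization step gives $D_{\max}(\Phi_i\|\Phi)\leq\log\frac{1}{p_i}$ with the \emph{non-free} average $\Phi=\sum_j p_j\Phi_j$ as reference, and you then want to trade $\Phi$ for the free $\cM$ using a ``gentle-measurement-type'' smoothing. But gentle-measurement arguments go the wrong way: they convert operator-domination statements into trace-norm statements, not vice versa. Diamond-norm closeness $\frac12\|\Phi-\cM\|_\diamond\leq\epsilon$ gives no complete-positivity-order relation between $\Phi$ and $\cM$, so no perturbation of $\Phi_i$ alone can turn $\Phi_i\leq\frac{1}{p_i}\Phi$ into $\Phi_i'\leq\frac{\nu}{p_i}\cM$. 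The paper's actual tool is Uhlmann's theorem for the completely bounded fidelity (continuity of the Stinespring dilation): one dilates $\overline{\cN}=\sum_i p_i\cN_i$ as $W=\sum_i\sqrt{p_i}\,W_i\otimes\ket{i}^F$ and obtains a \emph{matching block-structured} dilation $Z=\sum_i\sqrt{p_i}\,Z_i\otimes\ket{i}^F$ of $\cM$ within completely bounded purified distance $\delta=\sqrt{\epsilon(2-\epsilon)}$; measuring $F$ yields a decomposition $\cM=\sum_i p_i\cM_i$ into cp maps with $\sum_i p_i\,\frac12\|\cN_i-\cM_i\|_\diamond\leq\delta$, and now $p_i\cM_i\leq\cM$ gives $1+R(\cM_i)\leq\frac{1}{p_i}$ against a \emph{genuinely free} reference. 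This dilation step is the heart of the converse and is absent from your proposal.

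A second, related defect: your pigeonhole ``$\max_i p_i\geq 1/k$'' selects the index of largest weight, but the per-index approximation quality $\delta_i=\frac12\|\cN\otimes\cF-\cV_i^{-1}\circ\cM_i\circ\cU_i^{-1}\|_\diamond$ is only controlled \emph{on average}, $\sum_i p_i\delta_i\leq\delta$; the most probable index may be badly approximated. The paper therefore runs two Markov inequalities simultaneously --- the set $\{i:\delta_i\leq\mu\delta\}$ has probability $>1-\frac{1}{\mu}$ and the set $\{i:\frac{1}{p_i}\leq\nu k\}$ has probability $>1-\frac{1}{\nu}$ --- and intersects them for $\frac{1}{\mu}+\frac{1}{\nu}\leq 1$, which is where both the smoothing radius $\mu\delta$ and the offset $\log(1-\frac{1}{\mu})$ (from $\nu=(1-\frac{1}{\mu})^{-1}$) actually originate; they are not a ``discarded reference mass'' in a change-of-reference smoothing. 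In the convex case the paper avoids the index selection entirely by averaging: $\sum_i p_i\,\cV_i^{-1}\circ\cM_i\circ\cU_i^{-1}\leq k\cdot\frac{1}{k}\sum_i\cV_i^{-1}\circ\cM\circ\cU_i^{-1}$, with the right-hand average free by convexity, giving $1+R^\delta(\cN\otimes\cF)\leq k$ directly --- but this too presupposes the decomposition $\cM=\sum_i p_i\cM_i$ obtained from the dilation argument.
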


\begin{proof}
\emph{Upper bound (achievability):} 
Find a free channel $\cF_0 \in \mathfrak{F}(A\longrightarrow B)$ 
such that $LR^{\epsilon-\eta}(\alpha) = D_{\max}^{\epsilon-\eta}(\cN\|\cF_0) = D_{\max}(\cN'\|\cF_0)$,
i.e.~$\cF_0$ is the optimal free channel that achieves the minimum in the 
robustness and $\cN'$ is the optimal approximation of $\cN$:
$\frac12\|\cN-\cN'\|_\diamond \leq \epsilon-\eta$.
Consider the following resource destruction process: Let $A'=A^{\otimes n-1}$,
$B'=B^{\otimes n-1}$ and $\cF=\cF_0^{\otimes n-1}$, furthermore
$\cU_i$ the conjugation by the pair transposition $(1~i)$ 
between the first and the $i$th system in $AA' = A^{\otimes n}$,
and $\cV_i$ the conjugation by the pair transposition $(1~i)$ 
between the first and the $i$th system in $BB' = B^{\otimes n}$.
Thus,
\[
  \overline{\cN} = \frac1n \sum_{i=1}^n \cV_i \circ (\cN\otimes\cF) \circ \cU_i
                 = \frac1n \sum_{i=1}^n \cF^{\otimes i-1} \otimes \cN \otimes \cF^{\otimes n-i},
\]
has a distance from $\cM = \cF^{\otimes n}\in\mathfrak{F}(A^n\longrightarrow B^n)$ 
that is bounded by the generalized Convex-Split Lemma~\ref{gcs} (Appendix \ref{app:convex-split}). Namely, 
it guarantees that $\frac12\|\overline{\cN}-\cM\|_\diamond \leq \epsilon$ as
soon as $\log n \geq LR^{\epsilon-\eta}(\cN)+\log\frac{1}{4\eta^2}$. As a
matter of fact (cf.~the remark after Definition \ref{defi:resource-erasure}),
noting that both $\overline{\cN}$ and $\cM$ are $n$-fold resource objects,
and concretely are no-signalling correlations in $\text{CPTP}(A\longrightarrow B)^{\otimes n}$,
any valid base norm $\|\cdot\|_\bullet$ for their space is going to be
bounded by the same $\epsilon$, in particular 
$\|\cdot\|_{\diamond\leftrightarrow}$ and $\|\cdot\|_{\diamond ns}$.
%\textcolor{red}{Check factor of $2$ between this statement and the convex split lemma!}

\medskip
\emph{Lower bound (optimality):} 
We have to consider an arbitrary resource destruction process
consisting of a free catalyst $\cF \in\mathfrak{F}(A'\longrightarrow B')$
and an ensemble of free reversible unitary conjugations
$\{p_i,\cU_i^{AA'},\cV_i^{BB'}\}_{i=1}^k$, such that
for $\cN_i := \cV_i \circ (\cN\otimes\cF) \circ \cU_i$,
\[
  \frac12 \left\| \sum_{i=1}^k p_i \cN_i - \cM \right\|_\diamond \leq \epsilon,
\]
with a suitable free map $\cM\in\mathfrak{F}$.
By the well-known relations between trace norm and fidelity of
states \cite{Fuchs-vandeGraaf}, which extend to the diamond norm
and the completely bounded fidelity \cite{Kretschmann:Steinsprung,Shirokov:e-bounded-F}, 
this means
\[
  P_{\text{cb}}\left(\sum_{i=1}^k p_i \cN_i, \cM\right) \leq \delta = \sqrt{\epsilon(2-\epsilon)},
\]
with the completely bounded purified distance and the completely bounded fidelity defined as
\begin{align}
  P_{\text{cb}}(\cN,\cM) &= \sqrt{1-F_{\text{cb}}(\cN,\cM)^2}, \\
  F_{\text{cb}}(\cN,\cM) &= \inf_{\rho} F\bigl((\cN\otimes\id)\rho,(\cM\otimes\id)\rho\bigr),
\end{align}
respectively.

Choosing isometric dilations of the channels $\cN_i$ as
$W_i:AA' \hookrightarrow BB' \otimes E$, we can write down an
isometric dilation for $\overline{\cN}=\sum_i p_i\cN_i$, namely
$W := \sum_i \sqrt{p_i} W_i \otimes \ket{i}^F$, mapping $AA'$ to $BB'\otimes EF$.
By the Uhlmann theorem for the completely bounded fidelity \cite{Kretschmann:Steinsprung}
(see also the explanation in \cite[Prop.~1~\&{}~Appendix]{Shirokov:e-bounded-F}),
this means that we can find an isometric dilation $Z = \sum_i \sqrt{p_i} Z_i \otimes \ket{i}$ 
of $\cM$ that is $\delta$-close to $W$ w.r.t.~$P_{\text{cb}}$.
Tracing out $E$ and measuring $F$, and using the domination of the
trace distance by the completely bounded purified distance, we get
\[\begin{split}
  \delta &\geq \frac12 \left\| \sum_i p_i\cN_i \otimes \proj{i} 
                              - \sum_i p_i\cM_i \otimes \proj{i} \right\|_\diamond \\
         &=    \sum_i p_i \| \cN_i-\cM_i \|_\diamond                               \\
         &=    \sum_i p_i \| \cN\otimes\cF - \cV_i^{-1}\circ\cM_i\circ\cU_i^{-1} \|_\diamond \\
         &\geq \left\| \cN\otimes\cF - \sum_i p_i \cV_i^{-1}\circ\cM_i\circ\cU_i^{-1} \right\|_\diamond,
\end{split}\]
where $\cM_i=\tr_E W_i\cdot W_i^\dagger$ is a cp map (though not necessarily cptp). 

Now, we use two observations: 
First, we have $\sum_i p_i \cM_i = \cM \in \mathfrak{F}$,
hence $1+R(\cM_i) \leq \frac{1}{p_i}$. Sandwiching by the inverses of 
$\cU_i$ and $\cV_i$, respectively, this translates to
\[
  \cV_i^{-1}\circ\cM_i\circ\cU_i^{-1} \leq \frac{1}{p_i}\cV_i^{-1}\circ\cM\circ\cU_i^{-1}
\]
for all $i$. We first show how to get the simpler lower bound in the case
of convex $\mathfrak{F}$: Averaging over $i$ with weights $p_i$ yields
\[
  \sum_i p_i \cV_i^{-1}\circ\cM_i\circ\cU_i^{-1} \leq k\,\frac1k \sum_i \cV_i^{-1}\circ\cM\circ\cU_i^{-1},
\]
and by convexity of the set of free channels, the 
channel $\frac1k \sum_i \cV_i^{-1}\circ\cM\circ\cU_i^{-1}$ on the 
right hand side is free, hence we get
\[
  1+R^\delta(\cN\otimes\cF)
     \leq 1+R\left(\sum_i p_i \cV_i^{-1}\circ\cM_i\circ\cU_i^{-1}\right) \leq k.
\]
On the other hand, by our assumptions on the resource theory,
and Proposition \ref{prop:robustness}, we have
$R^\delta(\cN) = R^\delta(\cN\otimes\cF) = R^\delta(\cN_i)$ for all $i$,
so the claimed lower bound follows.

Moving on to the general case, let 
$\delta_i := \frac12 \| \cN\otimes\cF - \cV_i^{-1}\circ\cM_i\circ\cU_i^{-1} \|_\diamond$
in the above reasoning, so that we have $\sum_i p_i\delta_i \leq \delta$, which means
that the set $\Delta_\mu := \{i : \delta_i \leq \mu\delta\}$ has total
probability $>1-\frac{1}{\mu}$, for $\mu > 1$. At the same time,
the set $\Pi_\nu := \{i: \frac{1}{p_i} \leq \nu k\}$, for $\nu > 1$, has
total probability $>1-\frac{1}{\nu}$. 
Now, if $\frac{1}{\mu}+\frac{1}{\nu} \leq 1$, we know that there must be an
index $i \in\Delta_\mu\cap\Pi_\nu$, i.e. both $\delta_i \leq \mu\delta$
and $\frac{1}{p_i} \leq \nu k$. Thus, on the one hand,
\[
  1+R(\cM_i) = 1+R(\cV_i^{-1}\circ\cM_i\circ\cU_i^{-1}) \leq \nu k,
\]
and $\frac12 \| \cN\otimes\cF - \cV_i^{-1}\circ\cM_i\circ\cU_i^{-1} \|_\diamond \leq \mu\delta$,
allowing us to conclude that 
\[
  1+R^{\mu\delta}(\cN) \leq \nu k.
\]
Making the tightest possible choice, $\nu = \left(1-\frac{1}{\mu}\right)^{-1}$, we 
obtain the claim.
%\begin{equation}
%  \mathsf{COST}^\epsilon(\cN) \geq LR^{\mu\delta}(\cN) + \log\left(1-\frac{1}{\mu}\right).
%\end{equation}
\end{proof}

\begin{rem}
To use the general lower bound, 
for instance for small $\delta$, we could think of $\mu$ as a constant
larger than $1$ and get the same form of the lower bound as in 
Theorem~\ref{thm:resource-erasure}, with a slightly larger smoothing error 
and an additive offset. On the other hand,
for $\delta$ close to $1$, one might choose $\mu = \frac{1+\delta}{2\delta}$, and get
smoothing error $\frac{1+\delta}{2}$ and additive offset $\log\frac{1-\delta}{1+\delta}$.
\end{rem}

\begin{rem}
We do not know, though suspect it to be the case, whether the converse (lower) bound of
Theorem~\ref{thm:resource-erasure} also holds in the case of multi-resources,
i.e. either a collection $\{\cN_1,\ldots,\cN_n\}$ of channels or a no-signalling
channel from $\text{CPTP}_{ns}(A_{[n]}\longrightarrow B_{[n]})$.
It seems that this would require a continuity bound of the Stinespring dilation of a 
memory channel.
\end{rem}

\section{To infinity -- and beyond!}
\label{sec:asymptotics}
In the last section, we considered the task of resource erasure
in the one-shot setting, i.e. one or finitely many copies of channels,
which we found to be tightly characterized by the smooth channel
log-robustness, which is derived from the channel max-relative
entropy.
By the knowledge of resource theories of quantum states, it is also natural to expect such quantities to 
play a fundamental role in the operational characterization of channel resources.
An important problem is then the asymptotic i.i.d. limit of the smooth log-robustness, 
which would be crucial to the operational resource theories of channels where 
asymptotically many instances are given (as conventionally assumed in information theory).

In the asymptotic regime of state resource theories, we typically see some relative 
entropy \cite{HorodeckiOppenheim:resource-theories,bg,AnshuHsiehJain:erasure}, 
which is not surprising due to the quantum asymptotic equipartition property (QAEP).  %The analog QAEP of channels could therefore be fundamental to the asymptotic resource theories of channels. 
%To discuss the asymptotic resource theory of channels, it would hence be important to obtain an analog of QAEP for channels.
Here we discuss the QAEP problem of channel max-relative entropy, which can be regarded as a simplified version, and perhaps an important stepping stone towards the problem of asymptotic log-robustness. Some partial results are presented, but the problem is not fully understood. 
%The situation is rather interesting: in information theory the usual development  is first iid and later one-shot, here in the resource theory of channels we obtain one-shot results first and try to go to the asymptotic limit from there. 

%\textcolor{red}{We should state clearly that we are interested in the limit of the smooth log-robustness, and that the QAEP presented subsequently is merely a simplified version, and perhaps a stepping stone...}

Recall the following form of QAEP for the smooth max-relative entropy between states
(assuming $\mathrm{supp}\,\rho\subset\mathrm{supp}\,\sigma$) \cite{Datta:max-ieee}:
\begin{equation} 
  \lim_{n\rightarrow\infty} \frac{1}{n}{D}_{\max}^\epsilon(\rho^{\otimes n}\|\sigma^{\otimes n}) 
       = {D}(\rho\|\sigma),   \label{eq:qaep}
\end{equation}
for all $0<\epsilon<1$.
Now the goal is to study the same limit as on the left hand side for channels,
to be concrete
\begin{equation}
    \sup_{\epsilon> 0}\liminf_{n\rightarrow\infty}
        \frac{1}{n}D_{\max}^\epsilon(\cN^{\otimes n}\|\cM^{\otimes n}),
\end{equation}
and likewise for the $\limsup$. Can this be expressed in some closed form,
or at least in Shannon-theoretic terms as an optimization of
entropy or relative entropy expressions?
%Other relative entropies between channels are not directly naturally defined.  
%One thing to do is to go back to relative entropies for states...   

It is straightforward to obtain the following lower bound of nice form: 
%Without smoothing we can obtain the following straightforward lower bound:
%\begin{thm}[Lower bound, without smoothing]
%\begin{equation}
%    \lim_{n\rightarrow\infty}\frac{1}{n}D_{\mathrm{max}}(N^{\otimes n}\|M^{\otimes n}) \geq \max_{\rho} D(N(\rho)\|M(\rho)).
%\end{equation}
%\end{thm}
%\begin{proof}
%\begin{eqnarray}
%\lim_{n\rightarrow\infty}\frac{1}{n}D_{\mathrm{max}}(N^{\otimes n}\|M^{\otimes n}) &=& \lim_{n\rightarrow\infty}\max_{\sigma}\frac{1}{n}D_{\mathrm{max}}(N^{\otimes n}\otimes I(\sigma)\|M^{\otimes n}\otimes I(\sigma)) \\
%&\geq&  \lim_{n\rightarrow\infty}\max_{\rho_n\in\mathcal{H}^{\otimes n}}\frac{1}{n}D_{\mathrm{max}}(N^{\otimes n}(\rho_n)\|M^{\otimes n}(\rho_n)) \\
%&\geq& \max_{\rho\in\mathcal{H}}\lim_{n\rightarrow\infty}\frac{1}{n}D_{\mathrm{max}}(N(\rho)^{\otimes n}\|M(\rho)^{\otimes n})\\
%&=&\max_{\rho\in\mathcal{H}}D(N(\rho)\|M(\rho)),
%\end{eqnarray}
%where the first line follows from Eq.~(\ref{state}), the second line follows from ignoring the maximization over auxiliary systems, the third line follows from restricting $\rho_n$ to $\rho^{\otimes n}$, and the last line follows from the QAEP Eq.~(\ref{eq:qaep}).
%\end{proof}

\begin{thm}[Lower bound]
For all $0<\epsilon<1$, and any pair of channels $\cN,\cM:A\longrightarrow B$,
\begin{equation}\label{eq:lowerbound}
  \liminf_{n\rightarrow\infty}\frac{1}{n}D_{\max}^\epsilon(\cN^{\otimes n}\|\cM^{\otimes n}) 
              \geq  D(\cN\|\cM),
\end{equation}
where
\begin{equation}
   D(\cN\|\cM) :=           \sup_{\rho\in\cS(A\otimes C)} {D}\bigl(\cN\otimes\id_C(\rho)\big\|\cM\otimes\id_C(\rho)\bigr)
\end{equation}
is the quantum relative entropy of channels (previously defined in \cite{Cooney2016,PhysRevA.97.012332}).
\end{thm}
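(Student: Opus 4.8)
The plan is to reduce the channel-level quantity to a state-level one using the single tool that does exactly this, namely the lower bound already established in Eq.~(\ref{eq:state-smoothed}), and then to invoke the state QAEP of Eq.~(\ref{eq:qaep}) on a carefully chosen i.i.d.\ input. The point is that Eq.~(\ref{eq:state-smoothed}) is an inequality in precisely the direction we need (it lower-bounds the channel smooth max-relative entropy by a supremum of state ones), so no matching upper bound is required.

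Concretely, I would first apply Eq.~(\ref{eq:state-smoothed}) to the tensor powers $\cN^{\ox n}$ and $\cM^{\ox n}$, which gives
\[
  D_{\max}^\epsilon(\cN^{\ox n}\|\cM^{\ox n})
     \geq \sup_{\rho} D_{\max}^\epsilon\bigl(\cN^{\ox n}\ox\id_R(\rho)\big\|\cM^{\ox n}\ox\id_R(\rho)\bigr),
\]
with $\rho$ ranging over states on $A^{\ox n}\ox R$ for an arbitrary auxiliary $R$. Since the right-hand side is a supremum, I may restrict to any convenient subfamily of inputs and retain a valid lower bound. The key choice is an i.i.d.\ input: fix a finite-dimensional auxiliary $C$ and a state $\rho_0\in\cS(A\ox C)$, take $R=C^{\ox n}$, and set $\rho=\rho_0^{\ox n}$ on $(A\ox C)^{\ox n}\cong A^{\ox n}\ox C^{\ox n}$. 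Because the channels factorize, so do the outputs, $\cN^{\ox n}\ox\id_{C^{\ox n}}(\rho_0^{\ox n})=\tau^{\ox n}$ and $\cM^{\ox n}\ox\id_{C^{\ox n}}(\rho_0^{\ox n})=\omega^{\ox n}$, where $\tau:=\cN\ox\id_C(\rho_0)$ and $\omega:=\cM\ox\id_C(\rho_0)$. Hence
\[
  \frac1n D_{\max}^\epsilon(\cN^{\ox n}\|\cM^{\ox n})
     \geq \frac1n D_{\max}^\epsilon(\tau^{\ox n}\|\omega^{\ox n}).
\]
Taking $n\to\infty$ and applying the state QAEP of Eq.~(\ref{eq:qaep}) to the right-hand side yields $\liminf_n \frac1n D_{\max}^\epsilon(\cN^{\ox n}\|\cM^{\ox n})\geq D(\tau\|\omega)$, and finally optimizing over $\rho_0$ and $C$ produces $\sup_{\rho_0}D\bigl(\cN\ox\id_C(\rho_0)\big\|\cM\ox\id_C(\rho_0)\bigr)=D(\cN\|\cM)$, which is the claim.

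The only genuine obstacle is the interplay between the support hypothesis of the state QAEP and the order of the supremum over inputs and the limit in $n$. Both are benign precisely because we want only a lower bound: fixing $\rho_0$ \emph{before} sending $n\to\infty$ gives a valid bound for each individual $\rho_0$, and the outer supremum is taken afterwards, so no minimax exchange in the difficult direction is invoked. The support condition $\mathrm{supp}\,\tau\subset\mathrm{supp}\,\omega$ is needed to apply Eq.~(\ref{eq:qaep}) for the chosen $\rho_0$; if it fails, then $\tau$ carries weight $q>0$ outside $\mathrm{supp}\,\omega$, so $\tau^{\ox n}$ has weight $1-(1-q)^n\to1$ outside $\mathrm{supp}\,\omega^{\ox n}$, forcing $D_{\max}^\epsilon(\tau^{\ox n}\|\omega^{\ox n})=+\infty$ for large $n$ (since any $\epsilon$-close state with $\epsilon<1$ still has support leaking out). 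Thus the inequality survives as $+\infty\geq+\infty$, and restricting the optimization to admissible $\rho_0$ does not decrease the supremum defining $D(\cN\|\cM)$.
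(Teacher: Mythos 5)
Your proof is correct and is essentially the paper's own argument: the paper simply re-derives the content of Eq.~(\ref{eq:state-smoothed}) inline (restriction of the test state to $\rho^{\otimes n}$, the max--min inequality, and the relaxation from a diamond-norm ball of channels to a trace-norm ball of output states) before invoking the state QAEP of Eq.~(\ref{eq:qaep}), whereas you cite that equation directly for $\cN^{\otimes n},\cM^{\otimes n}$ and then restrict to i.i.d.\ inputs. Your explicit handling of the support condition in Eq.~(\ref{eq:qaep}) is a point the paper leaves implicit, but it does not change the route.
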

\begin{proof}
The proof follows from a simple reduction to the QAEP for states. For any test state $\rho\in\cS(A\otimes C)$,
\[\begin{split}
  \liminf_{n\rightarrow\infty}\frac{1}{n}D_{\max}^\epsilon(\cN^{\otimes n}\|\cM^{\otimes n})
    &= \liminf_{n\rightarrow\infty}\inf_{\cN':\frac{1}{2}\|\cN'-\cN^{\otimes n}\|_\diamond\leq\epsilon}
                                    \frac{1}{n}D_{\max}(\cN'\|\cM^{\otimes n})                         \\
    &= \liminf_{n\rightarrow\infty}\inf_{\cN':\frac{1}{2}\|\cN'-\cN^{\otimes n}\|_\diamond\leq\epsilon}
                                    \max_{\sigma\in\cS({A^nC})}
                   \frac{1}{n}{D}_{\max}\bigl(\cN'\otimes\id_C(\sigma)\big\|\cM^{\otimes n}\otimes\id_C(\sigma)\bigr) \\
    &\geq \liminf_{n\rightarrow\infty}\inf_{\cN':\frac{1}{2}\|\cN'-\cN^{\otimes n}\|_\diamond\leq\epsilon}
                                    %\max_{\rho\in\cS{AC}}
         \frac{1}{n}{D}_{\max}\left(\cN'\otimes\id_{C^n}(\rho^{\otimes n})\Big\|\bigl((\cM\otimes\id_C)\rho\bigr)^{\otimes n}\right) \\
    &\geq \liminf_{n\rightarrow\infty}
    %\max_{\rho\in\cS{AC}}
                           \inf_{\rho':\frac{1}{2}\|\rho'-((\cN\otimes\id_C)\rho)^{\otimes n}\|_1\leq\epsilon}
                   \frac{1}{n}{D}_{\max}\left(\rho'\Big\|\bigl((\cM\otimes\id_C)\rho\bigr)^{\otimes n}\right)                       \\
    &\geq %\max_{\rho\in\cS{AC}} 
    \lim_{n\rightarrow\infty}
         \frac{1}{n}{D}_{\max}^\epsilon\left(\bigl((\cN\otimes\id_C)\rho\bigr)^{\otimes n}\Big\|
                                      \bigl((\cM\otimes\id_C)\rho\bigr)^{\otimes n}\right) \\
    &=     %\max_{\rho\in\cS{AC}} 
    {D}\bigl(\cN\otimes\id_C(\rho)\big\|\cM\otimes\id_C(\rho)\bigr),
\end{split}\]
where the second line follows from Eq.~(\ref{state}), the third line follows from letting $\sigma = \rho^{\otimes n}$, the fourth line follows from the max-min inequality, the fifth line follows from restricting $\rho_n$ to $\rho^{\otimes n}$, 
%the fifth line follows from the continuity of $D_{\mathrm{max}}$ \cite{qrenyi}, 
and the last line follows from Eq.~(\ref{eq:qaep}).
The conclusion Eq.~(\ref{eq:lowerbound}) is obtained by taking supremum over $\rho$.

%\textcolor{red}{Present above chain of inequalities for arbitrary test state $\rho$, then at the end only take supremum over it; not maximum!}
\end{proof}

%{\color{red}Some comments on the upper bound?.... By invoking the same logic as in \cite{TomamichelWildeWinter:strong-converse}, may reduce $\rho_n$ to permutation-symmetric ones, then perhaps de Finetti reduction? The $n\rightarrow\infty$ limit may diverge even when the single-copy $D_{\max}$ is finite...?}

%{\color{red}In terms of Choi matrices?} 

\bigskip\noindent

%\textcolor{red}{Reversibility problem: Make this an example, or a set of examples, basically returning first to Gibbs-preserving maps a la Faist et al., explaining the presence of a unit resource and of asymptotic reversibility. Then go on to MIO \cite{coh_mio} and explain issues...}
In the following, we discuss in more depth several informative specific theories, which yield 
special insights into the asymptotic theory of channel resources.  

First, the asymptotic theory exhibits simple and reversible behaviour for the elementary 
case of constant channels:
\begin{exmp}
Let $\mathfrak{F} := \{ \text{constant channels} \}$ 
(i.e. channels that map all input states to some given state).   
One can show that $LR(\cN) = \inf_{\cM\in\mathfrak{F}}D_{\max}(\cN\|\cM) = I_{\max}(A:B)_\cN$
(as well as the smoothed versions), where 
\begin{equation}
   I_{\max}(A:B)_\cN  :=  \inf_{\sigma_B}{D}_{\max}(\cN_{A'\longrightarrow B}(\Phi_{AA'})\|\tr_B\cN_{A'\longrightarrow B}(\Phi_{AA'})\otimes\sigma_B),
\end{equation}
$\Phi_{AA'}$ being the maximally entangled state on $AA'$,
is the channel's max-information of $\cN$ recently defined in \cite{2018arXiv180705354F}. 
Note that the channel smooth max-information $\frac{1}{2}I^\epsilon_{\max}(A:B)_\cN$ corresponds 
exactly to the $\epsilon$-error one-shot
simulation cost of $\cN$ under no-signalling assisted codes \cite{2018arXiv180705354F}.

The QAEP of $I^\epsilon_{\max}(A:B)_\cN$  (which is equivalent to a weaker quantum reverse 
Shannon theorem \cite{qrst,Berta2011} under free no-signalling correlations, see 
also \cite{2018arXiv180705354F} for a direct proof) then implies the ideal feature of 
the theory of constant channels $\mathfrak{F}$ that the log-robustness asymptotically 
converges to the relative entropy distance to the set of free channels:
\begin{equation}
    \lim_{\epsilon\rightarrow 0}\lim_{n\rightarrow\infty}\frac{1}{n}LR^\epsilon(\cN^{\otimes n}) 
                                               = \min_{\cM\in\mathfrak{F}}D(\cN\|\cM),
\end{equation}
which is also equivalent to the channel's mutual information.

This also gives to the r.h.s. an operational meaning as the asymptotic no-signalling-assisted quantum 
simulation cost, and the quantum reverse Shannon theorem implies equality of all 
no-signalling-assisted and entanglement-assisted quantum capacities and simulation costs.  
So these important quantum Shannon-theoretic quantities, and in particular 
the reversibility of the associated tasks, can also be understood as features 
of the resource theory of constant channels.
Note that the entanglement-assisted capacity theorem \cite{Bennett:2006:ECQ:2263245.2267047}
treats $\cN^{\otimes n}$ as a single cptp map, and that conversely the simulation
using entanglement \cite{qrst,Berta2011} and no-signalling correlation \cite{2018arXiv180705354F}
produces a single cptp map approximating $\cN^{\otimes n}$.
\end{exmp}

For state resource theories, it is well known that the maximal set of free operations 
(defined relative to convex set of free states, see Section \ref{sec:free}), modulo a
topological subtlety which seems necessary in complete generality, induces reversibility 
in the asymptotic limit \cite{HorodeckiOppenheim:resource-theories,BrandaoPlenio:ent-reversible,bg}.
However, it is not clear whether the analogous properties hold for the resource theory of channels.

In particular, we are interested in the following form of asymptotic reversibility for 
channels based on free channels (which is not quite the lifted notion based on a general set
of freeness-preserving superchannels): 
Given the definition of some reference resource or ``standard'' resource states that serves 
as a unit resource, does it hold that in the asymptotic limit, 
the minimum rate of unit resources needed to be consumed (as an attached ``battery'' 
system) to simulate any target channel by free channels (simulation cost $\zeta$) equals 
the maximum rate of unit resources that can be produced by this channel assisted by 
free channels (generating capacity $\xi$)?.
We say the theory is asymptotically reversible if $\zeta = \xi = \mu$, where
for any asymptotic i.i.d~multi-resource transformation 
$\cN^{\otimes n}\stackrel{\mathfrak{F}}{\longrightarrow} 
                 \stackrel{\epsilon}{\approx}\cN'^{\otimes nR}$
as $n\rightarrow\infty$ and $\epsilon\rightarrow 0$, can be achieved at rate
\begin{equation}
    R \rightarrow R(\cN\longrightarrow\cN') = \frac{\mu(\cN)}{\mu(\cN')},
\end{equation}
by first generating $\mu(\cN)$ units of standard resource per copy by $\cN$, and then 
simulating $\frac{n\mu(\cN)}{\mu(\cN')}$ copies of $\cN'$ using $\mu(\cN')$ units per copy.
Therefore, $R(\cN\longrightarrow\cN')R(\cN'\longrightarrow\cN) = 1$,
and $\mu(\cN)$ is the unique measure that characterizes the resource value of channel 
$\cN$ in terms of channel simulation.
In all known instances, $\cN^{\otimes n}$ is actually not considered as the most
general multi-resource, which would allow any $n$ uses of $\cN$, but as the more
modest multi-resource that is a sequence of $r$ cptp maps $\cN^{\otimes n}$, 
$n_1+n_2+\ldots+n_r=n$ and a fixed or arbitrarily slowly growing depth
$r$. This notion is formalized for instance in \cite{DHW:Shannon-resource}.

\medskip
When does this occur? In particular, does reversibility already hold when the set 
of free channels is maximal?
Here we elaborate on this issue in the physically motivated theories of 
quantum thermodynamics and coherence, which arise naturally from their respective 
sets of free states.

\begin{exmp}[\cite{FaistBertaTomamichel:thermo-cost}]\label{exp:thermo}
Maximal resource theory of quantum thermodynamics:
\[
  \mathfrak{F} := \mathfrak{G} 
                = \{ \text{Gibbs-preserving maps} \}
\]
(i.e. channels with fixed point the Gibbs state $\tau_\beta = \frac1Z e^{-\beta H}$, 
given inverse temperature $\beta$ and Hamiltonian $H$). 
The natural reference resource here is the ``quantum work'' needed to be expended
(or can be extracted) to implement (or from) a channel 
\cite{thermal_c,FaistRenner:thermo-cost,FaistBertaTomamichel:thermo-cost}.
The standard resource states storing the work are w.l.o.g.~a given set of energy levels $\{\ket{E}\}$.

This theory is indeed reversible: the asymptotic generating (work extraction) capacity and simulation 
cost are universally given by the free energy increasing power, aka the \emph{thermodynamic capacity}
\begin{equation}
    \mu(\cN) = \Omega_{\mathsf{ip},F}(\cN) = \max_{\rho}\left\{ F\bigl(\cN(\rho)\bigr) - F(\rho)  \right\},  \label{eq:th_cap}
\end{equation}
where $F(\rho) = \tr(\rho H) - \beta^{-1}S(\rho) = \beta^{-1}{D}(\rho\|\tau_\beta)$ is the free energy.
In \cite{thermal_c} it is shown how to extract $\mu(\cN)$ units of energy per use of the 
channel $\cN$, via a protocol of $r$ iterative invocations of the cptp map
$\cN^{\otimes \frac{n}{r}}$, when $r$ is sufficiently large and $n\rightarrow\infty$. No 
arbitrarily adaptive protocol can be better.
Conversely, in \cite{FaistBertaTomamichel:thermo-cost} the simulation cost is obtained by an 
AEP of the one-shot simulation cost, which actually reduces to the smooth log-robustness \cite{Faist:personal}.  
That is, the asymptotic log-robustness in this theory is given by Eq.~(\ref{eq:th_cap}),
and hence the simulation of cost of any $r$ multi-resource copies of $\cN^{\otimes \frac{n}{r}}$,
as $n\rightarrow\infty$.
\end{exmp}

\begin{exmp}\label{exp:mio}
Maximal resource theory of coherence:
$\mathfrak{F} := \text{MIO} = \{ \text{maximally incoherent operations} \}$.
The following results and a similar discussion is are from the recent paper \cite{coh_mio}.
%One can consider the following problem. How many standard units does it cost (with the aid of free resources) to simulate a target channel?  If it equals the capacity of generating the units then the theory is reversible.
%For state resource theories, it is well known that the (asymptotically) maximal set of free operations leads to reversibility \cite{HorodeckiOppenheim:resource-theories,bg}.  However, it is not clear whether the similar property holds for resource theory of channels.
%Let's consider this problem in the context of coherence.
The standard resource state in dimension $k$ is the maximally coherent state 
$\ket{\Psi_k} = \frac{1}{\sqrt{k}}\sum_{i=0}^{k-1}\ket{i}$, which represent $\log k$ ``cosbits''.  
% MIO is the maximal set of free operations.   
%Are the asymptotic rates of i) generating $\Psi_k$ by MIO and ii) consuming $\Psi_k$ to simulate MIO equal?

Now consider the two directions separately:
\begin{enumerate}
\item Generating capacity. The proof for IO in \cite{coh_c} can be fairly easily extended 
      to MIO to show \cite{coh_mio}
      \begin{equation}
        \xi(\cN) = \Omega^*_{\mathsf{ip},C_r}(\cN) 
                 = \sup_{\rho}\left\{C_r\bigl(\cN\otimes \id(\rho)\bigr) - C_r(\rho)\right\},
      \end{equation}
      where $C_r(\rho) = \min_{\delta\in\Delta}{D}(\rho\|\delta)  
                       = {D}\bigl(\rho\|\Delta(\rho)\bigr) = S\bigl(\Delta(\rho)\bigr)-S(\rho)$ 
      is the relative entropy of coherence.
      Note that, as in the previous example, $\xi(\cN)$ units of coherence (``cosbits'')
      per channel use can be extracted, via a protocol of $r$ iterative invocations of 
      the cptp map $\cN^{\otimes \frac{n}{r}}$, when $r$ is sufficiently large and 
      $n\rightarrow\infty$. No arbitrarily adaptive protocol can be better.

\item Simulation cost. For this direction, we are also able to show that the one-shot 
      approximate MIO simulation cost $\zeta_1^\epsilon(\cN)$ of a channel $\cN:A\longrightarrow B$
      is given by its log-robustness:
      \begin{equation}
        LR^\epsilon(\cN) \leq \zeta_1^\epsilon(\cN) 
                         \leq \log\lceil 1+R^\epsilon(\cN)\rceil.
      \end{equation}
      By this result, the asymptotic simulation cost is given by the asymptotic log-robustness:
      \begin{equation}
         \zeta(\cN) = \sup_{\epsilon> 0}\limsup_{n\rightarrow\infty}\frac{1}{n} LR^\epsilon(\cN^{\otimes n}).
        %\frac{1}{n}D_{\max}^\epsilon(\cN^{\otimes n}\|\cM^{\otimes n})
      \end{equation}
      Hence, as in the previous example, this is the simulation of cost of any $r$ multi-resource copies 
      of $\cN^{\otimes \frac{n}{r}}$, as $n\rightarrow\infty$.
      For the special case of cq-channels $\cN(\ket{i}\bra{j}) = \delta_{ij}\sigma_i$ 
      for some fixed $\sigma_i$ (written in the incoherent basis), 
      one can show that $\zeta(\cN)=\max_i C_r\bigl(\cN(\ket{i}\bra{i})\bigr)$.  
      But the problem of what this limit converges to in general remains open.  
      We wonder whether the result for Gibbs-preserving maps (Example \ref{exp:thermo}) can be indicative.
\end{enumerate}
%So: one-shot $S_{\mathrm{MIO},1}^\epsilon(N) \geq LR_{\mathrm{MIO}}^\epsilon(N)$. Asymptotically
%\begin{eqnarray}
  %  S_{\mathrm{MIO}}(N) &\geq& \lim_{\epsilon\rightarrow 0}\lim_{n\rightarrow\infty} \min_{M\in\mathrm{MIO}}\frac{1}{n}D_{\mathrm{max}}^\epsilon(N^{\otimes n}\|M). %\\
%\end{eqnarray}

It is not clear at this point whether this MIO theory of coherence is reversible or not in general.  
Either case seems very interesting: it could be that the reversibility holds, 
i.e.~$\xi(\cN) = \zeta(\cN)$ for all channels $\cN$ after all, which seems highly 
nontrivial to show and may lead to important advances in the understanding of channel 
theories (a partial result is that $\xi(\cN) = \zeta(\cN)=\max_i C_r\bigl(\cN(\ket{i}\bra{i})\bigr)$ 
for cq-channels $\cN$); or that the reversibility fails (even when the set of free 
channels is maximal), i.e.~$\xi(\cN) < \zeta(\cN)$ for some $\cN$, which would be 
a peculiar feature of the channel theory.   
%If $G_{\mathrm{MIO}}(N)<S_{\mathrm{MIO}}(N)$ for some channel $N$, then irreversibility 
%may occur even with the maximal set of free operations.  
%It would be a curious difference between state and channel theories.   
%At this point it is not clear.....
\end{exmp}

%Analog reversibility question (necessary condition) in channel resource theories: Generation capacity = Simulation cost? Maximal set $\Rightarrow$ reversible?

\begin{comment}
Coherence. MIO is the maximal set. Compare:
\begin{itemize}
    \item Capacity of generating $\Psi_k$. the proof for IO in \cite{coh_c} can be extended to MIO to show:
    \begin{equation}
        G_{\mathrm{MIO}}(N) = \sup_{\rho}\{C_r(N\otimes I(\rho)) - C_r(\rho)\}
    \end{equation}
    where $C_r(\rho) = \min_{\delta\in\Delta}S(\rho\|\delta)  =S(\rho\|\Pi(\rho)) = S(\Pi(\rho)) - S(\rho)$ is the relative entropy of coherence.
    
    \item Simulation cost (discussed above). One-shot rate given by $LR_{\mathrm{MIO}}^\epsilon(N) \leq S_{\mathrm{MIO},1}^\epsilon(N)$. Asymptotically
    \begin{eqnarray}
    S_{\mathrm{MIO}}(N) &\geq& \lim_{\epsilon\rightarrow 0}\lim_{n\rightarrow\infty} \min_{M\in\mathrm{MIO}}\frac{1}{n}D_{\mathrm{max}}^\epsilon(N^{\otimes n}\|M^{\otimes n}) \text{\color{red} Not tensor $M$} \\
    &\geq& \min_{M\in\mathrm{MIO}}\max_{\rho}D(N(\rho)\|M(\rho))   \;\;\;\;\text{(conjecture with smoothing, last section)}
    \end{eqnarray}
\end{itemize}
\end{comment}

%\section{Other questions}
%\begin{itemize}
  %  \item Role of free states in resource theories?  Eg non-trivial resource theories where all states are free?
  %  \item ...
%\end{itemize}

\section{Conclusions}
In this work, we have drawn the outlines of a general framework of resource theories
where both the objects and their transforming free entities are quantum
channels. Such resource theories have been studied in concrete instances before,
but here we have attempted to identify the overarching common features of those.

In particular, we have exhibited the minimal requirements for a set of free
channels, and how to formulate resource transformations in a framework of
channel simulation. Furthermore, we have discussed generally available classes
of resource monotones. Most importantly among those, every quantum channel
resource theory has a notion of robustness, and we have shown that this number
can be given an operational interpretation as the resource erasure cost under
free operations and catalysts, counting only the randomness.

Much work remains to be done, especially in the domain of asymptotic transformations
and their rates. A most interesting problem is that of asymptotic reversibility,
which has been proven in some special cases, but remains wide open in general.

\begin{comment}
This work is awesome, and that's the least one can say about it, standing
in utter awe in front of its earth-shattering, science-revolutionizing 
awesomeness. I mean, look. Wow, yeah! 
Are there open questions? Right; but they are almost even better than
the so-called results. Hombre, just read and worship. And then cite us,
dude! So long and thanks for all the fish.
\end{comment}

\bigskip
\emph{Note added.} 
As we were finalising the present manuscript, we became aware of the closely related 
work ``Operational Resource Theory of Quantum Channels'' by Yunchao Liu and Xiao Yuan, 
arXiv:1904.02680.

\begin{acknowledgments}
We thank Nilanjana Datta, Philippe Faist, Kun Fang, Aram Harrow, Seth Lloyd, Iman Marvian, Peter Shor, Xin Wang, 
and Dong Yang, for various discussions. In particular we are indebted to Gilad Gour for 
his illuminating insights into this topic, which has helped us infinitely in
focusing our own ideas.
ZWL acknowledges support by AFOSR, ARO, and Perimeter Institute for Theoretical Physics.
Research at Perimeter Institute is supported by the Government
of Canada through Industry Canada and by the Province of Ontario through the Ministry
of Research and Innovation.
AW acknowledges support by the Spanish MINECO (project FIS2016-80681-P),
with the support of FEDER funds, and the Generalitat de
Catalunya (CIRIT project 2017-SGR-1127).
\end{acknowledgments}

\appendix
\section{Conic convex-split lemma}
\label{app:convex-split}
Here we state and prove the abstract convex-split lemma, which in the case of states
is due to Anshu et al. \cite{AnshuDevabathiniJain:convex-split}, in a setting
where the objects are elements of an ordered real vector space $V$, which
is given by a positive cone $V_+ \subset V$, 
a closed, convex, generating and pointed cone in $V$. The relation
$\alpha\leq\beta$ is read to mean $\beta-\alpha\in V_+$, and it defines
vector partial order on $V$.
To make a meaningful statement, we ned two further structures:
first, an extension of the order from $V$ to $V^{\otimes n}$,
and second a norm $\|\cdot\|_\bullet$ norm on $V$ and all the $V^{\otimes n}$,
which ``goes well'' with the vector ordering, in the following sense:
\begin{enumerate}
  \item $V_+^{\otimes n} \subset (V^{\otimes n})_+$, i.e.~for all
    $\alpha_i$ positive in $V$, the tensor product 
    $\alpha_1\ox\alpha_2\ox\cdots\ox\alpha_n$ is positive in $V^{\ox n}$
  \item For all $\alpha_i\in V_+$, 
    $\|\alpha_1\ox\alpha_2\ox\cdots\ox\alpha_n\|_\bullet \leq \prod_{i=1}^n \|\alpha_i\|_\bullet$.
  \item More generally, if $\alpha_i\in V$ and $\alpha_j\in V_+$ for all $j\neq i$, 
    $\|\alpha_1\ox\alpha_2\ox\cdots\ox\alpha_n\|_\bullet \leq \prod_{i=1}^n \|\alpha_i\|_\bullet$.
\end{enumerate}

We call the pair $(V_+,\|\cdot\|_\bullet)$, which is really meant to include 
the tower $((V^{\otimes n})_+,\|\cdot\|_\bullet)$ of tensor extensions,
for all $n$, \emph{elegant} if these conditions are fulfilled. 

Examples of elegant ordered normed spaces include the Hermitian
trace class operators with the semidefinite order and trace norm,
and more generally any generalized probabilistic theory (GPT), which
is characterized by a positive cone $V_+ \subset V$ and a unit
functional $u\in V^*$ that supports $V_+$, 
$(V^{\otimes n})_+ \supset V_+^{\otimes n}$ with unit $u^{\otimes n}$,
and the \emph{base norm} 
\[
  \|x\|_\bullet = \inf u(\alpha+\beta) \text{ s.t. } x=\alpha-\beta,\ \alpha,\beta \in V_+.
\]
Most important for us is the example of $V$ being the set of Hermitian-preserving
superoperators, $V_+$ the cone of completely positive maps,
and $\|\cdot\|_\bullet = \|\cdot\|_\diamond$ the diamond norm. 
Furthermore, $V$ could be the subspace of Hermitian-preserving,
trace-proportional superoperators, i.e.~$\tr\cN(\rho) = \lambda_{\cN}\tr\rho$
for all $\rho$, with a constant $\lambda_{\cN}$.
Then, $V^{\ox n}$ consists of the no-signalling channels
from $n$ systems $A_1,A_2,\ldots,A_n$ to $n$ systems $B_1,B_2,\ldots,B_n$,
and $\|\cdot\|_\bullet$ could be the distinguishability norm on these
channels, either $\|\cdot\|_{\diamond\leftrightarrow}$ or $\|\cdot\|_{\diamond ns}$
which is defined via a maximization over all admissible
measurements on these objects, including all so-called wirings of the
various registers in a general quantum circuit, followed by a POVM
(see Section \ref{sec:transformations}). 
%\textcolor{red}{...(...)...}

\begin{lem}[Generalized Convex-Split Lemma]
\label{gcs}
Let $\alpha,\beta \in V_+ \subset V$ with $\|\alpha\|_\bullet = \|\beta\|_\bullet = 1$,
and such that there exists an $\alpha'\in V_+$, $\|\alpha'\|_\bullet \leq 1$,
with $\beta = \frac{1}{\lambda}\alpha+\left(1-\frac{1}{\lambda}\right)\alpha'$.
Define
\begin{equation}
  \gamma^{(n)} = \frac{1}{n}\sum_{i=1}^{n}\alpha_i \otimes \beta^{\otimes[n]\setminus i},
\end{equation}
the average state of $\alpha$ in a random register $i\in_R[n]$ and
$\beta$ in all other registers $[n]\setminus i$. Then,
\begin{equation}
    \| \gamma^{(n)} - \beta^{\otimes n}\|_\bullet \leq \sqrt{\frac{\lambda}{n}}. 
\end{equation}
In particular, if $\log n \geq \log \lambda + 2\log\frac{1}{\delta}$, then
\begin{equation}
  \| \gamma^{(n)} - \beta^{\otimes n}\|_\bullet \leq \delta. 
\end{equation}
\end{lem}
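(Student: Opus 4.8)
The plan is to exploit the binary mixture representation $\beta = p\alpha + q\alpha'$ with $p = \frac{1}{\lambda}$ and $q = 1-\frac{1}{\lambda}$, and to expand both $\gamma^{(n)}$ and $\beta^{\otimes n}$ over a single common family of product terms. For each subset $S\subseteq[n]$ set $\omega_S := \bigotimes_{i\in S}\alpha \otimes \bigotimes_{i\notin S}\alpha'$. By multilinearity of the tensor product, $\beta^{\otimes n} = \sum_{S\subseteq[n]} p^{|S|}q^{n-|S|}\omega_S$; expanding each $\beta$-factor in $\alpha_i\otimes\beta^{\otimes[n]\setminus i}$ and then averaging over $i$ (each $S$ is hit once for every $i\in S$) yields $\gamma^{(n)} = \sum_{S} \frac{|S|}{n}\,p^{|S|-1}q^{n-|S|}\omega_S$. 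Subtracting, I would obtain the single identity
\begin{equation}
  \gamma^{(n)} - \beta^{\otimes n} = \sum_{S\subseteq[n]} c_S\,\omega_S, \qquad
  c_S = \left(\frac{|S|}{np}-1\right)p^{|S|}q^{n-|S|}.
\end{equation}

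The crucial observation is that each $\omega_S$ is a tensor product of positive elements of norm at most $1$ (since $\|\alpha\|_\bullet=1$ and $\|\alpha'\|_\bullet\leq1$), so by elegance condition~2 we have $\|\omega_S\|_\bullet\leq1$. Thus the plain triangle inequality already gives
\[
  \|\gamma^{(n)} - \beta^{\otimes n}\|_\bullet \leq \sum_{S}|c_S|
    = \sum_{k=0}^{n}\binom{n}{k}p^k q^{n-k}\left|\frac{k}{np}-1\right|
    = \mathbb{E}\left|\frac{K}{np}-1\right|,
\]
where $K\sim\mathrm{Bin}(n,p)$. In other words, the whole $1/\sqrt{n}$ gain over the naive $n$-independent estimate $\|\alpha-\beta\|_\bullet$ (which is all that condition~3 alone would deliver) is relocated into a binomial concentration estimate.

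To finish, I would bound this first absolute moment by the standard deviation via Cauchy--Schwarz. Since $\mathbb{E}[K]=np$ and $\mathrm{Var}(K)=npq$,
\[
  \mathbb{E}\left|\frac{K}{np}-1\right| = \frac{1}{np}\,\mathbb{E}|K-np|
     \leq \frac{1}{np}\sqrt{\mathrm{Var}(K)} = \sqrt{\frac{q}{np}} = \sqrt{\frac{\lambda-1}{n}} \leq \sqrt{\frac{\lambda}{n}},
\]
using $\frac{q}{np}=\frac{1-1/\lambda}{n/\lambda}=\frac{\lambda-1}{n}$. This proves the main bound, and the ``in particular'' clause is then immediate, since $\sqrt{\lambda/n}\leq\delta$ is equivalent to $n\geq\lambda/\delta^2$, i.e.~$\log n\geq\log\lambda+2\log\frac{1}{\delta}$.

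I expect the only genuine obstacle to be the first step: identifying the common expansion into the $\omega_S$ and getting the coefficient bookkeeping right, in particular the factor $|S|/n$ that arises from counting which registers $i\in S$ carry $\alpha$. Once that reorganization is in hand, positivity together with elegance condition~2 collapses the problem to a textbook binomial variance computation, and---notably---no inner product, fidelity, or relative-entropy/Pinsker structure (as used in the original state-space proof) is required, so the argument goes through verbatim for any elegant $(V_+,\|\cdot\|_\bullet)$.
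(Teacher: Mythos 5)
Your proposal is correct and follows essentially the same route as the paper's proof: expand both $\gamma^{(n)}$ and $\beta^{\otimes n}$ over the product terms indexed by which registers carry $\alpha$ versus $\alpha'$ (the paper merely groups your $\omega_S$ into type-class averages $\omega_k$), bound each term's norm by $1$ via the elegance conditions, and reduce to the first absolute moment of a $\mathrm{Bin}(n,p)$ variable, controlled by Cauchy--Schwarz and the variance $np(1-p)$. The coefficient bookkeeping and the final bound (you even note the slightly sharper $\sqrt{(\lambda-1)/n}$) match the paper exactly.
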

\begin{proof}
The assumption can be rewritten as $\beta = p\alpha + (1-p)\alpha'$ for $\alpha'\in V_+$, and 
$p = \frac{1}{\lambda}\in(0,1]$; the case $p=0$ need not concern us.
Now, expand $\beta^{\otimes n}$ and $\gamma^{(n)}$ in terms of $\alpha$ and $\alpha'$:
\begin{align}
    \beta^{\otimes n} &= \bigl(p\alpha+(1-p)\alpha'\bigr)^{\otimes n} \\
    &= \sum_{S\subset[n]} p^{|S|}(1-p)^{n-|S|}\alpha^{\otimes S}\alpha'^{\otimes[n]\backslash S} \\
    &= \sum_{k=0}^n\binom{n}{k}p^k(1-p)^{n-k}\omega_k,
\end{align}
and similarly,
\begin{align}
    \gamma^{(n)} &= \frac{1}{n}\sum_{i=1}^n\alpha_i\otimes\bigl(p\alpha+(1-p)\alpha'\bigr)^{\otimes [n]\setminus i}\\
    &=\sum_{\emptyset\neq S\subset[n]} p^{|S|-1}(1-p)^{n-|S|}\alpha^{\otimes S}\alpha'^{\otimes[n]\backslash S}
    \\ &=\sum_{k=1}^n\binom{n-1}{k-1}p^{k-1}(1-p)^{n-k}\omega_k
               \\ &= \sum_{k=0}^n\frac{k}{np}\binom{n}{k}p^k(1-p)^{n-k}\omega_k,
\end{align}
where $\omega_k$ is the type class state obtained by putting all $S$ with
fixed cardinality $|S|=k$ together:
\begin{equation}
    \omega_k = \frac{1}{\binom{n}{k}}\sum_{|S|=k}\alpha^{\otimes S} \otimes \alpha'^{\otimes[n]\setminus S},
\end{equation}
and we used the identity $\binom{n-1}{k-1} = \frac{k}{n}\binom{n}{k}$.

Hence, we get
\begin{equation}
    \gamma^{(n)} - \beta^{\otimes n} 
             = \frac{1}{p}\sum_{k=0}^n\binom{n}{k}p^k(1-p)^{n-k}\left(\frac{k}{n}-p\right)\omega_k.
\end{equation}
Applying triangle inequalities on the norm, and using $\|\omega_k\|_\bullet \leq 1$,
we get
\begin{equation}
    \|\gamma^{(n)} - \beta^{\otimes n}\|_\bullet 
             \leq \frac{1}{p} \sum_{k=0}^n\binom{n}{k}p^k(1-p)^{n-k}\left|\frac{k}{n}-p\right|.
\end{equation}
Notice that the right hand side is a simple probabilistic object, namely
$\frac{1}{p}\mathbb{E}\left|\frac{K}{n}-p\right|$, where $K$ is a binomially distributed random variable: $K = \sum_{i=1}^n K_i$ with
i.i.d.~Bernoulli variables $K_i\sim B(1,p)$, each of which has mean $p$ and variance $p(1-p)$.  
So by convexity and standard facts from probability theory (Chebyshev inequality),
\begin{align}
    \|\gamma^{(n)} - \beta^{\otimes n}\|_\bullet 
        &\leq \frac{1}{p}\sqrt{\mathbb{E}\left(\frac{k}{n}-p\right)^2} \\
        &=    \frac{1}{p}\sqrt{\frac{1}{n^2}n\mathrm{Var}K_i} \\
        &=    \frac{1}{p}\sqrt{\frac{1}{n}p(1-p)} 
        \leq \sqrt{\frac{1}{pn}} = \sqrt{\frac{\lambda}{n}},
\end{align}
and we are done.
\end{proof}

\begin{rem}
As pointed out before, the above generalizes the Convex-Split Lemma 
of \cite{AnshuDevabathiniJain:convex-split}, when we take for $\|\cdot\|_\bullet$
the trace norm on trace class matrices.
Our applications are based on the choices $\|\cdot\|_\diamond$ and its
generalizations for multi-resources and no-signalling channels.
\end{rem}

%\bibliographystyle{apsrev4-1}
%\bibliographystyle{unsrt}
%\bibliography{channel}

\begin{thebibliography}{10}

\bibitem{virmani}
Martin~B. Plenio and Shashank Virmani.
\newblock An introduction to entanglement measures.
\newblock {\em Quantum Information and Computation}, 7(1):1--51, January 2007.

\bibitem{PhysRevA.57.1619}
Vlatko Vedral and Martin~B. Plenio.
\newblock Entanglement measures and purification procedures.
\newblock {\em Physical Review A}, 57:1619--1633, Mar 1998.

\bibitem{entrev}
Ryszard Horodecki, Pawe\l{} Horodecki, Micha\l{} Horodecki, and Karol
  Horodecki.
\newblock Quantum entanglement.
\newblock {\em Reviews in Modern Physics}, 81:865--942, Jun 2009.

\bibitem{PhysRevLett.113.140401}
Tilo Baumgratz, Marcus Cramer, and Martin~B. Plenio.
\newblock Quantifying coherence.
\newblock {\em Physical Review Letters}, 113:140401, Sep 2014.

\bibitem{PhysRevLett.116.120404}
Andreas Winter and Dong Yang.
\newblock Operational resource theory of coherence.
\newblock {\em Physical Review Letters}, 116:120404, Mar 2016.

\bibitem{RevModPhys.89.041003}
Alexander Streltsov, Gerardo Adesso, and Martin~B. Plenio.
\newblock Colloquium: Coherence.
\newblock {\em Reviews in Modern Physics}, 89:041003, Oct 2017.

\bibitem{PhysRevLett.111.250404}
Fernando G. S.~L. {Brand\~{a}o}, Micha\l{} Horodecki, Jonathan Oppenheim,
  Joseph~M. Renes, and Robert~W. Spekkens.
\newblock Resource theory of quantum states out of thermal equilibrium.
\newblock {\em Physical Review Letters}, 111:250404, Dec 2013.

\bibitem{nanothermo}
Micha{\l} Horodecki and Jonathan Oppenheim.
\newblock Fundamental limitations for quantum and nanoscale thermodynamics.
\newblock {\em Nature Communications}, 4:2059, Jun 2013.

\bibitem{PhysRevA.80.012307}
Gilad Gour, Iman Marvian, and Robert~W. Spekkens.
\newblock Measuring the quality of a quantum reference frame: The relative
  entropy of frameness.
\newblock {\em Physical Review A}, 80:012307, Jul 2009.

\bibitem{noether}
Iman Marvian and Robert~W. Spekkens.
\newblock Extending noether's theorem by quantifying the asymmetry of quantum
  states.
\newblock {\em Nature Communications}, 5:3821, 2014.

\bibitem{VMGE:magic}
Victor Veitch, S~A~Hamed Mousavian, Daniel Gottesman, and Joseph Emerson.
\newblock The resource theory of stabilizer quantum computation.
\newblock {\em New J. Phys.}, 16(1):013009, 2014.

\bibitem{howard_2017}
Mark Howard and Earl Campbell.
\newblock Application of a {{Resource Theory}} for {{Magic States}} to
  {{Fault}}-{{Tolerant Quantum Computing}}.
\newblock {\em Phys. Rev. Lett.}, 118:090501, 2017.

\bibitem{HorodeckiOppenheim:resource-theories}
Michal Horodecki and Jonathan Oppenheim.
\newblock (quantumness in the context of) resource theories.
\newblock {\em International Journal of Modern Physics B}, 27(01n03):1345019,
  2013.

\bibitem{bg}
Fernando G. S.~L. {Brand\~{a}o} and Gilad Gour.
\newblock Reversible framework for quantum resource theories.
\newblock {\em Physical Review Letters}, 115:070503, Aug 2015.

\bibitem{rdm}
Zi-Wen Liu, Xueyuan Hu, and Seth Lloyd.
\newblock Resource destroying maps.
\newblock {\em Physical Review Letters}, 118:060502, Feb 2017.

\bibitem{PhysRevA.95.062314}
Gilad Gour.
\newblock Quantum resource theories in the single-shot regime.
\newblock {\em Phys. Rev. A}, 95:062314, Jun 2017.

\bibitem{regula}
Bartosz Regula.
\newblock Convex geometry of quantum resource quantification.
\newblock {\em Journal of Physics A: Mathematical and Theoretical},
  51(4):045303, 2018.

\bibitem{AnshuHsiehJain:erasure}
Anurag Anshu, Min-Hsiu Hsieh, and Rahul Jain.
\newblock Quantifying resources in general resource theory with catalysts.
\newblock {\em Physical Review Lettwers}, 121:190504, Nov 2018.

\bibitem{COECKE201659}
Bob Coecke, Tobias Fritz, and Robert~W. Spekkens.
\newblock A mathematical theory of resources.
\newblock {\em Information and Computation}, 250:59--86, 2016.
\newblock arXiv[quant-ph]:1409.5531.

\bibitem{fritz_2017}
Tobias Fritz.
\newblock Resource convertibility and ordered commutative monoids.
\newblock {\em Mathematical Structures in Computer Science}, 27(6):850--938,
  2017.

\bibitem{2015arXiv151108818D}
L{\'\i}dia {del Rio}, Lea {Kraemer}, and Renato {Renner}.
\newblock Resource theories of knowledge.
\newblock arXiv[quant-ph]1511.08818, Nov 2015.

\bibitem{delRio:currencies}
Lea {Kraemer} and Lidia {del Rio}.
\newblock {Currencies in resource theories}.
\newblock arXiv:1605.01064, May 2016.

\bibitem{2018arXiv180604937S}
Carlo {Sparaciari}, Lidia {del Rio}, Carlo~Maria {Scandolo}, Philippe {Faist},
  and Jonathan {Oppenheim}.
\newblock {The first law of general quantum resource theories}.
\newblock arXiv:1806.04937, June 2018.

\bibitem{DHW:Shannon-resource}
Igor Devetak, Aram~W. Harrow, and Andreas Winter.
\newblock A resource framework for quantum shannon theory.
\newblock {\em IEEE Transactions on Information Theory}, 54(10):4587--4618, Oct
  2008.

\bibitem{qrst}
Charles~H. Bennett, Igor Devetak, Aram~W. Harrow, Peter~W. Shor, and Andreas
  Winter.
\newblock The quantum reverse shannon theorem and resource tradeoffs for
  simulating quantum channels.
\newblock {\em IEEE Transactions on Information Theory}, 60(5):2926--2959, May
  2014.

\bibitem{bip}
Charles~H. Bennett, Aram~W. Harrow, Debbie~W. Leung, and John~A. Smolin.
\newblock On the capacities of bipartite hamiltonians and unitary gates.
\newblock {\em IEEE Transactions on Information Theory}, 49(8):1895--1911, Aug
  2003.

\bibitem{thermal_c}
Miguel Navascu\'es and Luis~Pedro Garc\'{\i}a-Pintos.
\newblock Nonthermal quantum channels as a thermodynamical resource.
\newblock {\em Physical Review Letters}, 115:010405, Jul 2015.

\bibitem{FaistRenner:thermo-cost}
Philippe Faist and Renato Renner.
\newblock Fundamental work cost of quantum processes.
\newblock {\em Physical Review X}, 8:021011, 2018.
\newblock arXiv[quant-ph]:1709.00506.

\bibitem{FaistBertaTomamichel:thermo-cost}
Philippe Faist, Mario Berta, and Fernando G. S.~L. {Brand\~{a}o}.
\newblock Thermodynamic capacity of quantum processes.
\newblock arXiv[quant-ph]:1807.05610, 2018.

\bibitem{coh_c}
Khaled Ben~Dana, Mar\'{\i}a Garc\'{\i}a~D\'{\i}az, Mohamed Mejatty, and Andreas
  Winter.
\newblock Resource theory of coherence: Beyond states.
\newblock {\em Physical Review A}, 95:062327, Jun 2017.

\bibitem{coh_mio}
Mar{\'i}a {Garc{\'\i}a D{\'\i}az}, Kun Fang, Xin Wang, Matteo Rosati, Michalis
  Skontiniotis, John Calsamiglia, and Andreas Winter.
\newblock Using and reusing coherence to realize quantum processes.
\newblock {\em Quantum}, 2:100, 2018.
\newblock arXiv[quant-ph]:1805.04045.

\bibitem{WangWildeSu:channel_magic}
Xin {Wang}, Mark~M. {Wilde}, and Yuan {Su}.
\newblock {Quantifying the magic of quantum channels}.
\newblock {\em arXiv e-prints}, page arXiv:1903.04483, Mar 2019.

\bibitem{GourWilde}
Gilad {Gour} and Mark~M. {Wilde}.
\newblock {Entropy of a quantum channel}.
\newblock arXiv:1808.06980, August 2018.

\bibitem{Gour:superchannel}
Gilad {Gour}.
\newblock {Comparison of Quantum Channels with Superchannels}.
\newblock arXiv:1808.02607, August 2018.

\bibitem{Yuan:channel-hypothesis}
Xiao Yuan.
\newblock Hypothesis testing and relative entropy of quantum channels.
\newblock {\em Physical Review A}, 99:032317, 2019.
\newblock arXiv[quant-ph]:1807.05958.

\bibitem{Aharonov:cptpcircuit}
Dorit Aharonov, Alexei Kitaev, and Noam Nisan.
\newblock Quantum circuits with mixed states.
\newblock In {\em Proceedings of the Thirtieth Annual ACM Symposium on Theory
  of Computing}, STOC '98, pages 20--30, New York, NY, USA, 1998. ACM.

\bibitem{comb}
Giulio Chiribella, G.~Mauro D'Ariano, and Paolo Perinotti.
\newblock Quantum circuit architecture.
\newblock {\em Phys. Rev. Lett.}, 101:060401, Aug 2008.

\bibitem{Pavia-combs}
Giulio Chiribella, G.~Mauro {D'Ariano}, and Paolo Perinotti.
\newblock Theoretical framework for quantum networks.
\newblock {\em Physical Review A}, 80:022339, Aug 2009.
\newblock arXiv[quant-ph]:0904.4483.

\bibitem{BrandaoPlenio:ent-reversible}
Fernando G. S.~L. {Brand\~{a}o} and Martin~B. Plenio.
\newblock Entanglement theory and the second law of thermodynamics.
\newblock {\em Nature Physics}, 4:873--877, Oct 2008.
\newblock arXiv[quant-ph]:0810.2319.

\bibitem{2006quant.ph.12146A}
Johan {\AA}berg.
\newblock {Quantifying Superposition}.
\newblock arXiv:quant-ph/0612146, December 2006.

\bibitem{Pavia-supermap}
Giulio Chiribella, G.~Mauro {D'Ariano}, and Paolo Perinotti.
\newblock Transforming quantum operations: Quantum supermaps.
\newblock {\em Europhysics Letters}, 83:30004, Jul 2008.
\newblock arXiv[quant-ph]:0804.0180.

\bibitem{ChitambarGour:QRT}
Eric Chitambar and Gilad Gour.
\newblock Quantum resource theories.
\newblock arXiv[quant-ph]:1806.06107, 2018.

\bibitem{Gour:channel_resource}
Gilad Gour.
\newblock Resource theories of quantum channels.
\newblock in preparation, 2019.

\bibitem{marshall1974}
Albert~W. Marshall and Ingram Olkin.
\newblock Majorization in multivariate distributions.
\newblock {\em Ann. Statist.}, 2(6):1189--1200, 11 1974.

\bibitem{Watrous:cb_sdp}
John Watrous.
\newblock Semidefinite programs for completely bounded norms.
\newblock {\em Theory of Computing}, 5(11):217--238, 2009.

\bibitem{Salek:superposition}
Sina Salek, Daniel Ebler, and Giulio Chiribella.
\newblock Quantum communication in a superposition of causal orders.
\newblock arXiv[quant-ph]:1809.06655, 2018.

\bibitem{Datta:max}
Nilanjana Datta.
\newblock Max-relative entropy of entanglement, alias log-robustness.
\newblock {\em International Journal of Quantum Information}, 07(02):475--491,
  2009.

\bibitem{BrandaoDatta:one-shot}
Fernando G. S.~L. {Brand\~{a}o} and Nilanjana Datta.
\newblock One-shot rates for entanglement manipulation under non-entangling
  maps.
\newblock {\em IEEE Transactions on Information Theory}, 57(3):1754--1760,
  March 2011.

\bibitem{PhysRevLett.120.070403}
Qi~Zhao, Yunchao Liu, Xiao Yuan, Eric Chitambar, and Xiongfeng Ma.
\newblock One-shot coherence dilution.
\newblock {\em Phys. Rev. Lett.}, 120:070403, Feb 2018.

\bibitem{2017arXiv171110512R}
Bartosz {Regula}, Kun {Fang}, Xin {Wang}, and Gerardo {Adesso}.
\newblock {One-shot coherence distillation}.
\newblock arXiv[quant-ph]:1711.10512, November 2017.

\bibitem{1rt}
Ryuji~Takagi Zi-Wen~Liu, Kaifeng~Bu.
\newblock One-shot operational quantum resource theory.
\newblock arXiv:1904.xxxxx.

\bibitem{PhysRevLett.89.240403}
Micha\l{} Horodecki, Jonathan Oppenheim, and Ryszard Horodecki.
\newblock Are the laws of entanglement theory thermodynamical?
\newblock {\em Physical Review Letters}, 89:240403, Nov 2002.

\bibitem{channel_discrimination}
Lu~{Li}, Kaifeng {Bu}, and Zi-Wen {Liu}.
\newblock {Quantifying the resource content of quantum channels: An operational
  approach}.
\newblock {\em arXiv e-prints}, page arXiv:1812.02572, Dec 2018.

\bibitem{PhysRevA.92.032331}
Azam Mani and Vahid Karimipour.
\newblock Cohering and decohering power of quantum channels.
\newblock {\em Physical Review A}, 92:032331, Sep 2015.

\bibitem{Garcia-Diaz:2016:NCP:3179439.3179441}
Mar\'{\i}a Garc\'{\i}a-D\'{\i}az, Dario Egloff, and Martin~B. Plenio.
\newblock A note on coherence power of n-dimensional unitary operators.
\newblock {\em Quantum Information and Computation}, 16(15-16):1282--1294,
  November 2016.

\bibitem{BU20171670}
Kaifeng Bu, Asutosh Kumar, Lin Zhang, and Junde Wu.
\newblock Cohering power of quantum operations.
\newblock {\em Physics Letters A}, 381(19):1670 -- 1676, 2017.

\bibitem{PhysRevA.62.030301}
Paolo Zanardi, Christof Zalka, and Lara Faoro.
\newblock Entangling power of quantum evolutions.
\newblock {\em Physical Review A}, 62:030301, Aug 2000.

\bibitem{PhysRevA.95.052306}
Paolo Zanardi, Georgios Styliaris, and Lorenzo Campos~Venuti.
\newblock Coherence-generating power of quantum unitary maps and beyond.
\newblock {\em Physical Review A}, 95:052306, May 2017.

\bibitem{watrous}
John Watrous.
\newblock Semidefinite programs for completely bounded norms.
\newblock {\em Theory of Computing}, 5:217--238, 2009.
\newblock arXiv[quant-ph]:0901.4709.

\bibitem{piani-watrous}
Marco Piani and John Watrous.
\newblock All entangled states are useful for channel discrimination.
\newblock {\em Physical Review Letters}, 102:250501, Jun 2009.
\newblock arXiv[quant-ph]:0901.2118.

\bibitem{2018arXiv180705354F}
Kun {Fang}, Xin {Wang}, Marco {Tomamichel}, and Mario {Berta}.
\newblock {Quantum Channel Simulation and the Channel's Smooth
  Max-Information}.
\newblock arXiv:1807.05354, July 2018.

\bibitem{Datta:max-ieee}
Nilanjana Datta.
\newblock Min- and max-relative entropies and a new entanglement monotone.
\newblock {\em IEEE Transactions on Information Theory}, 55(6):2816--2826, June
  2009.

\bibitem{BrandaoPlenio:ent-reversible:long}
Fernando G. S.~L. {Brand\~{a}o} and Martin~B. Plenio.
\newblock A reversible theory of entanglement and its relation to the second
  law.
\newblock {\em Communications in Mathematical Physics}, 295:829--851, May 2010.
\newblock arXiv[quant-ph]:0710.5827.

\bibitem{2018arXiv180901672T}
Ryuji {Takagi}, Bartosz {Regula}, Kaifeng {Bu}, Zi-Wen {Liu}, and Gerardo
  {Adesso}.
\newblock {Operational advantage of quantum resources in subchannel
  discrimination}.
\newblock arXiv[quant-ph]:1809.01672, September 2018.

\bibitem{Faist:personal}
Philippe Faist.
\newblock Personal communication, \emph{Beyond IID in Information Theory 6},
  Cambridge, July 2018.

\bibitem{GroismanPopescuWinter}
Berry Groisman, Sandu Popescu, and Andreas Winter.
\newblock On the quantum, classical and total amount of correlations in a
  quantum state.
\newblock {\em Physical Review A}, 72:032317, Sep 2005.

\bibitem{Fuchs-vandeGraaf}
Chris~A. Fuchs and Jeroen {van de Graaf}.
\newblock Cryptographic distinguishability measures for quantum mechanical
  states.
\newblock {\em IEEE Transactions on Information Theory}, 45:1216--1227, May
  1999.

\bibitem{Kretschmann:Steinsprung}
Dennis Kretschmann, Dirk Schlingemann, and Reinhard~F. Werner.
\newblock A continuity theorem for stinespring's dilation.
\newblock {\em Journal of Functional Analysis}, 255:1889--1904, Oct 2008.
\newblock arXiv[quant-ph]:0710:2495.

\bibitem{Shirokov:e-bounded-F}
Maksim~E. Shirokov.
\newblock Uniform continuity bounds for information characteristics of quantum
  channels depending on input dimension and on input energy.
\newblock arXiv[quant-ph]:1610.08870v3,v4, 2017.

\bibitem{Cooney2016}
Tom Cooney, Mil{\'{a}}n Mosonyi, and Mark~M Wilde.
\newblock {Strong Converse Exponents for a Quantum Channel Discrimination
  Problem and Quantum-Feedback-Assisted Communication}.
\newblock {\em Communications in Mathematical Physics}, 344(3):797--829, Jun
  2016.

\bibitem{PhysRevA.97.012332}
Felix Leditzky, Eneet Kaur, Nilanjana Datta, and Mark~M. Wilde.
\newblock Approaches for approximate additivity of the holevo information of
  quantum channels.
\newblock {\em Phys. Rev. A}, 97:012332, Jan 2018.

\bibitem{Berta2011}
Mario Berta, Matthias Christandl, and Renato Renner.
\newblock The quantum reverse shannon theorem based on one-shot information
  theory.
\newblock {\em Communications in Mathematical Physics}, 306(3):579, Aug 2011.

\bibitem{Bennett:2006:ECQ:2263245.2267047}
C.~H. Bennett, P.~W. Shor, J.~A. Smolin, and A.~V. Thapliyal.
\newblock Entanglement-assisted capacity of a quantum channel and the reverse
  shannon theorem.
\newblock {\em IEEE Transactions on Information Theory}, 48(10):2637--2655,
  September 2006.

\bibitem{AnshuDevabathiniJain:convex-split}
Anurag Anshu, {Vamsi Krishna} Devabathini, and Rahul Jain.
\newblock Quantum message compression with applications.
\newblock {\em Physical Review Letters}, 119:120506, 2017.
\newblock arXiv[quant-ph]:1410.3031.

\end{thebibliography}

\end{document}